\documentclass[acmlarge]{acmart}
\makeatletter
\newcommand{\confshort}{\acmConference@shortname}
\newcommand{\conffull}{\acmConference@name}
\newcommand{\confdate}{\acmConference@date}
\newcommand{\confloc}{\acmConference@venue}
\AtBeginDocument{
  \fancypagestyle{firstpagestyle}{
    \fancyhead{}%
    \fancyfoot[C]{}%
  }
  \fancyhf{}
  \fancyhead[LO]{\@headfootfont\shorttitle}%
  \fancyhead[RE]{\@headfootfont\@shortauthors}%
  \fancyhead[LE]{\@headfootfont\footnotesize \confshort, \confdate, \confloc}%
  \fancyhead[RO]{\@headfootfont\footnotesize \confshort, \confdate, \confloc}%
  \fancyfoot[C]{}%
}

\makeatother

\AtBeginDocument{%
  }

\usepackage[table]{xcolor}    
\usepackage{subcaption}

\usepackage{enumitem}
\usepackage{url}

\usepackage{tabularx}
\usepackage{array}

\usepackage{multirow}
\usepackage{tcolorbox}
\usepackage{float}
\usepackage{booktabs}
\usepackage{tikz}
\usepackage{pgfplots}
\pgfplotsset{compat=1.18}
\usetikzlibrary{positioning}
\usepgfplotslibrary{groupplots}
\usepackage{chngcntr} 
\counterwithin{equation}{section}

\newcommand{\meanstd}[2]{#1\textsubscript{\scriptsize$\pm$#2}}
\usetikzlibrary{patterns}

\usepackage{amsmath}
\usepackage{wrapfig}
\usepackage{tikz}
\usetikzlibrary{arrows.meta,positioning,fit,calc,backgrounds}

\copyrightyear{2026}
\acmYear{2026}
\setcopyright{usgovmixed}
\acmConference[FAccT '26]{The 2026 ACM Conference on Fairness, Accountability, and Transparency}{June 25--28, 2026}{Montreal, QC, Canada}
\acmBooktitle{The 2026 ACM Conference on Fairness, Accountability, and Transparency (FAccT '26), June 25--28, 2026, Montreal, QC, Canada}
\acmDOI{10.1145/3805689.3812242}
\acmISBN{979-8-4007-2596-8/2026/06}

\usepackage{graphicx}
\usepackage{subcaption} 
\usepackage{caption}

\begin{document}


\title[Selective Counterfactual Consistency for Vertical Federated Learning]{Toward Individual Fairness Without Centralized Data: \\Selective Counterfactual Consistency for Vertical Federated Learning}

\author{Dawood Wasif}
\orcid{0000-0002-0513-5890}

\email{dawoodwasif@vt.edu}
\affiliation{%
  \institution{Virginia Tech}
  \city{Alexandria}
  \state{Virginia}
  \country{USA}
}

\author{Chandan K. Reddy}
\orcid{0000-0003-2839-3662}

\email{reddy@cs.vt.edu}
\affiliation{%
  \institution{Virginia Tech}
  \city{Alexandria}
  \state{Virginia}
  \country{USA}
}

\author{Terrence J. Moore}
\orcid{0000-0003-3279-2965}

\email{terrence.j.moore.civ@army.mil}
\affiliation{%
  \institution{U.S. Army Research Laboratory}
  \city{Adelphi}
  \state{Maryland}
  \country{USA}
}

\author{Jin-Hee Cho}
\orcid{0000-0002-5908-4662}

\email{jicho@vt.edu}
\affiliation{%
  \institution{Virginia Tech}
  \city{Alexandria}
  \state{Virginia}
  \country{USA}
}

\renewcommand{\shortauthors}{Wasif et al.}

\begin{abstract}
When algorithmic decisions depend on data distributed across institutions, how can we ensure that an individual’s outcome does not change arbitrarily based on a protected attribute? We study this question in vertical federated learning (VFL), where features are split across parties, sensitive attributes may be private, and proxies for protected characteristics can be scattered across institutional boundaries under strict privacy constraints. Our focus is on individual-level counterfactual stability, i.e., per-instance prediction consistency under protected-attribute interventions as formalized in the causal fairness literature, rather than group parity guarantees such as demographic parity or equalized odds. We propose \textsc{SCC-VFL}, a server-centric framework for enforcing selective counterfactual consistency (SCC) at the individual level in VFL. SCC-VFL operationalizes a given policy specification by combining three components: (i) differentially private, graph-free discovery of feature roles into non-descendants, policy-permitted mediators, and impermissible proxies using only a formally private sketch of the sensitive attribute, with a formal per-release privacy that does not extend to the full training pipeline; (ii) masked counterfactual generation that edits only mediators while fixing non-descendants and suppressing proxy leakage; and (iii) server-side enforcement via an SCC consistency loss that penalizes impermissible prediction changes under protected-attribute interventions. Across three real-world datasets spanning credit, healthcare, and criminal justice, SCC-VFL maintains or improves predictive accuracy while sharply reducing decision flip rates by up to 98\% relative to strong baselines. It also lowers attribute-inference attack success and improves robustness, demonstrating favorable utility-fairness-privacy trade-offs in realistic VFL deployments.
\end{abstract}

\begin{CCSXML}
<ccs2012>
  <concept>
    <concept_id>10010147.10010257</concept_id>
    <concept_desc>Computing methodologies~Machine learning</concept_desc>
    <concept_significance>500</concept_significance>
  </concept>
  <concept>
    <concept_id>10010147.10010257.10010293</concept_id>
    <concept_desc>Computing methodologies~Learning paradigms</concept_desc>
    <concept_significance>300</concept_significance>
  </concept>
  <concept>
    <concept_id>10002944.10011123</concept_id>
    <concept_desc>Security and privacy~Privacy protections</concept_desc>
    <concept_significance>300</concept_significance>
  </concept>
  <concept>
    <concept_id>10002944.10011123.10011677</concept_id>
    <concept_desc>Security and privacy~Differential privacy</concept_desc>
    <concept_significance>200</concept_significance>
  </concept>
  <concept>
    <concept_id>10002944.10011123.10011131</concept_id>
    <concept_desc>Security and privacy~Privacy-preserving protocols</concept_desc>
    <concept_significance>200</concept_significance>
  </concept>
  <concept>
    <concept_id>10010405.10010444</concept_id>
    <concept_desc>Applied computing~Law, social and behavioral sciences</concept_desc>
    <concept_significance>100</concept_significance>
  </concept>
</ccs2012>
\end{CCSXML}

\ccsdesc[500]{Computing methodologies~Machine learning}
\ccsdesc[300]{Computing methodologies~Learning paradigms}
\ccsdesc[300]{Security and privacy~Privacy protections}
\ccsdesc[200]{Security and privacy~Differential privacy}
\ccsdesc[200]{Security and privacy~Privacy-preserving protocols}
\ccsdesc[100]{Applied computing~Law, social and behavioral sciences}

\keywords{vertical federated learning, individual fairness, counterfactual consistency, differential privacy, algorithmic accountability}



\maketitle

\section{Introduction}
\label{sec:intro}

\noindent \textbf{Do sensitive attributes change the decision?}
Algorithmic decision systems increasingly shape high-stakes outcomes in credit, hiring, healthcare, and criminal justice, directly determining who receives resources and opportunities. This raises expectations that decisions must be accurate, defensible, and socially legitimate \cite{barocas2023fairml}. A core requirement is fairness: An individual’s outcome should not change simply because a protected attribute such as race, gender, or age changes, or because the model relies on an impermissible proxy. Yet many deployed systems satisfy population-level fairness constraints while still permitting decision changes for specific individuals under hypothetical changes to protected attributes \cite{kusner2017counterfactual}. Although the community has developed a substantial fairness toolkit \cite{barocas2023fairml}, much of it emphasizes group-level criteria such as demographic parity and equalized odds \cite{feldman2015disparate,hardt2016equality}, alongside individual notions of treating similar individuals similarly \cite{dwork2012awareness}. These criteria can be mutually incompatible in practice, forcing explicit trade-offs that are often opaque to affected individuals \cite{kleinberg2017risk, binns2020apparent}. As accountability demands intensify, there is growing need for individual-level protections that stakeholders can understand, audit, and contest \cite{barocas2023fairml}. Following the causal fairness literature \cite{kusner2017counterfactual,chiappa2019path}, we use \emph{individual fairness} throughout this paper to denote per-instance counterfactual stability under a declared policy specification, distinct from group-level parity constraints such as demographic parity or equalized odds.

\noindent \textbf{Practical motivation.}
Consider a credit consortium where Bank~A holds account histories, Employer~B holds income records, and Bureau~C holds third-party risk indicators. These parties jointly train a lending model via VFL without centralizing sensitive data. But how can they ensure that a 25-year-old applicant receives a decision stable under a counterfactual change in age, holding policy-relevant qualifications constant? This is a legal requirement, not merely a technical preference: age discrimination in lending is prohibited~\cite{weerts2023algorithmic}, yet no single party observes all features, and the sensitive attribute may be privately held. Similar challenges arise wherever high-stakes decisions rely on data distributed across organizational boundaries, including healthcare, employment, and criminal justice.

\noindent \textbf{Data are split, but decisions are shared.}
As predictive data become increasingly distributed across organizations, data islands are becoming common \cite{yang2019federated}. Federated learning enables collaborative training without centralizing raw data \cite{mcmahan2017fedavg,kairouz2021advances}. While most work assumes horizontal partitioning, many real settings are vertically partitioned, with parties holding disjoint feature subsets for overlapping individuals \cite{yang2019federated,yang2023survey}. In VFL, party-local representations are aggregated into a joint predictor, but fairness becomes harder because proxies may be distributed across parties, counterfactual interventions may be implausible under partial visibility, and sensitive attributes may still leak through shared representations or updates \cite{melis2019exploiting,luo2021featureinference,yang2023survey,jin2021cafe,fu2022label}. Existing approaches such as statistical constraints, representation invariance, and adversarial debiasing often fail to cleanly separate permissible mediators from impermissible proxies, leading to under-protection or utility loss \cite{zemel2013fairrepr,zhang2018adversarialbias}.

\noindent \textbf{Fairness is about causal what-if questions.}
Fairness in high-stakes decision-making hinges on causal influence rather than correlation \cite{pearl2009causality, loftus2018causal}. A natural standard is counterfactual consistency: an individual’s outcome should remain unchanged under a hypothetical intervention on the protected attribute, holding fixed what should remain the same for that person \cite{kusner2017counterfactual}. Crucially, this framing recognizes that not all causal pathways are alike. Some pathways are impermissible, others may be acceptable under policy, and observed features may act as proxies that should not transmit sensitive influence \cite{kilbertus2017avoiding,zhang2018causalexplanation}. Operationalizing this reasoning typically presumes a structural causal model or a vetted causal graph~\cite{barocas2020hidden, kasirzadeh2021use}. In practice, such graphs are rarely available, are often contested, and are costly to maintain in multi-party settings subject to distribution shift.

\noindent \textbf{Our proposal: SCC-VFL.}
We introduce SCC-VFL, or Selective Counterfactual Consistency for Vertical Federated Learning, a training-time framework that targets individual-level counterfactual stability while respecting privacy and partial visibility. SCC-VFL integrates three components: (i) a graph-free, differentially private procedure to identify candidate non-descendants, permissible mediators, and impermissible proxies; (ii) party-local masked counterfactual generation that edits only policy-permitted mediators while suppressing proxy leakage; and (iii) server-side enforcement via a selective counterfactual consistency loss that penalizes impermissible prediction changes while preserving legitimate pathways. Figure~\ref{fig:sccvfl_overview_small} summarizes the VFL fairness challenge and the design of SCC-VFL.

\noindent \textbf{Key contributions.}
This paper makes four contributions toward individual-level fairness and auditability in VFL:

\begin{enumerate}[leftmargin=*, nosep, label=(\arabic*)]

\item \textbf{Problem formulation: selective counterfactual consistency for VFL.}
We define SCC for VFL: predictions should be stable under counterfactual changes to a sensitive attribute, with non-descendants fixed and only policy-permitted mediators allowed to vary.

\item \textbf{Graph-free, policy-aware mask discovery under differential privacy.}
We propose a DP, server-assisted mask discovery that partitions each party’s features into non-descendants (N), mediators (M), and proxies (P), thereby separating policy specification from enforcement.

\item \textbf{Masked counterfactual generation with leakage control.}
We design party-local masked generators that edit only mediators, preserve non-descendant identity, and reduce proxy leakage while sharing only representations.

\item \textbf{Server-centric enforcement and empirical validation.}
We enforce SCC at aggregation and evaluate across multiple domains, reducing individual flip rates (up to 98\%) while maintaining accuracy and lowering empirical sensitive leakage under inference attacks.

\end{enumerate}

\begin{figure}
\centering
\begin{tikzpicture}[
  font=\footnotesize,
  >=Latex,
  box/.style={draw=black!70, rounded corners, align=left, inner sep=4pt, line width=0.7pt},
  party/.style={box, minimum width=3.35cm, minimum height=0.55cm},
  server/.style={box, minimum width=2.55cm, align=center},
  stage/.style={box, minimum width=3.15cm, minimum height=0.95cm},
  qbox/.style={box, draw=black!65, fill=black!3},
  p1box/.style={party, draw=blue!60!black, fill=blue!20},
  p2box/.style={party, draw=blue!60!black, fill=blue!10},
  p3box/.style={party, draw=blue!60!black, fill=blue!5},
  hardbox/.style={box, draw=black!60, fill=black!2},
  srvbox/.style={server, draw=black!65, fill=black!5},
  s1box/.style={stage, draw=green!55!black, fill=green!50},
  s2box/.style={stage, draw=green!55!black, fill=green!40},
  s3box/.style={stage, draw=green!55!black, fill=green!20},
  outbox/.style={stage, draw=green!55!black, fill=green!10},
  arrowP1/.style={-Latex, draw=blue!60!black, line width=0.9pt},
  arrowP2/.style={-Latex, draw=blue!60!black, dashed, line width=0.9pt},
  arrowP3/.style={-Latex, draw=blue!60!black, loosely dotted, line width=1.0pt},
  arrow12/.style={-Latex, draw=green!55!black, line width=0.9pt},
  arrow23/.style={-Latex, draw=green!55!black, dashed, line width=0.9pt},
  arrow3o/.style={-Latex, draw=green!55!black, dotted, line width=1.0pt},
  arrowLR/.style={-Latex, draw=black!55, line width=1.0pt},
  panel/.style={draw=black!60, fill=black!1, rounded corners, inner sep=6pt, line width=0.8pt}
]

\node[qbox, align=center, minimum width=6.8cm] (q)
{\textbf{\em If the sensitive attribute $s$ changes to $s'$,}\\
 \textbf{\em does the prediction $\hat{y}$ change?}};

\node[p1box, anchor=north west, align=left, text width=5.7cm] (p1)
  at ($(q.south west)+(0,-0.40cm)$)
  {\textbf{P1:} local features $\mathbf{x}_1$ (may hold $s$)};
\node[p2box, anchor=north west, align=left, text width=5.7cm] (p2)
  at ($(p1.south west)+(0,-0.18cm)$)
  {\textbf{P2:} local features $\mathbf{x}_2$ (may include proxies)};
\node[p3box, anchor=north west, align=left, text width=5.7cm] (p3)
  at ($(p2.south west)+(0,-0.18cm)$)
  {\textbf{P3:} local features $\mathbf{x}_3$ (may include proxies)};

\node[srvbox, minimum height=1.95cm, anchor=west]
  (srv) at ($(p2.east)+(2.10cm,0)$)
  {\textbf{Server}\\[2pt]prediction $\hat{y}$};

\draw[arrowP1] (p1.east) -- (p1.east -| srv.west)
  node[pos=0.52, above=-1pt, fill=white, inner sep=1pt, font=\scriptsize]{$\mathbf{h}_1$};
\draw[arrowP2] (p2.east) -- (p2.east -| srv.west)
  node[pos=0.52, above=-1pt, fill=white, inner sep=1pt, font=\scriptsize]{$\mathbf{h}_2$};
\draw[arrowP3] (p3.east) -- (p3.east -| srv.west)
  node[pos=0.52, above=-1pt, fill=white, inner sep=1pt, font=\scriptsize]{$\mathbf{h}_3$};

\node[hardbox, anchor=north west, text width=10.2cm]
  (hard) at ($(p3.south west)+(0,-0.30cm)$)
  {\textbf{\em Hard in VFL:} $s$ may be private; proxies are split;
   na\"{\i}ve counterfactuals (CFs) can be off-support; shared signals may leak.\\[2pt]
   \textbf{Notations:} $s$ sensitive attribute; $\mathbf{x}_i$ party features;
   $\mathbf{h}_i$ shared representation; $\hat{y}$ prediction.
   $L_{\text{task}}$ is the main prediction loss (e.g., cross-entropy);
   $L_{\text{SCC}}$ is the counterfactual consistency loss that penalizes
   prediction changes between an input and its masked counterfactual;
   $\lambda$ balances task utility and SCC regularization in
   $L_{\text{task}}+\lambda L_{\text{SCC}}$.};

\begin{scope}[on background layer]
\node[panel, fit=(q)(p1)(p2)(p3)(srv)(hard)] (L) {};
\end{scope}

\node[font=\bfseries, anchor=south west]
  at ($(L.north west)+(0.10cm,0.08cm)$) {Problem};

\node[s1box, right=14mm of L, yshift=2.00cm, align=center] (s1)
{\textbf{(1) DP masks}\\$N/M/P$ via $\tilde{s}$};
\node[s2box, below=4mm of s1, align=center] (s2)
{\textbf{(2) Masked CF}\\edit $M$, fix $N$, guard $P$};
\node[s3box, below=4mm of s2, align=center] (s3)
{\textbf{(3) Server SCC}\\$L_{\text{task}}+\lambda L_{\text{SCC}}$};
\node[outbox, below=4mm of s3, align=center] (out)
{\textbf{Outcome}\\more stable $\hat{y}$\\lower leakage};

\draw[arrow12] (s1.south) -- (s2.north);
\draw[arrow23] (s2.south) -- (s3.north);
\draw[arrow3o] (s3.south) -- (out.north);

\begin{scope}[on background layer]
\node[panel, fit=(s1)(s2)(s3)(out)] (R) {};
\end{scope}

\node[font=\bfseries, anchor=south west]
  at ($(R.north west)+(0.10cm,0.08cm)$) {\textsc{SCC-VFL}};

\draw[arrowLR] (L.east) -- ($(R.west |- L.east)$);

\end{tikzpicture}
\caption{Overview of \textsc{SCC-VFL}. \emph{Left:} In VFL, features are distributed across parties and sensitive attributes may be private, making naive counterfactual edits unstable or leaky. \emph{Right:} \textsc{SCC-VFL} discovers feature roles via differentially private masks, edits only policy-permitted mediators, and enforces prediction stability at server, improving stability while reducing leakage.}
\label{fig:sccvfl_overview_small}
\end{figure}

\section{Related Work}
\label{sec:related}

\subsection{Foundations of Algorithmic Fairness}
Foundational work formalizes algorithmic fairness through complementary group- and individual-level criteria, while exposing tensions among statistical parity goals, calibration, and utility \citep{hardt2016equality,barocas2016big,corbett2017algorithmic,kleinberg2017risk}. Group notions such as demographic parity and equalized odds target parity in outcomes or error rates across protected groups \citep{hardt2016equality,kleinberg2017risk}, while individual notions seek consistent treatment of similar individuals but typically depend on a task-specific similarity metric \citep{dwork2014algorithmic, yurochkin2019training}. Surveys and critiques emphasize that observational metrics can obscure causal pathways from sensitive attributes to predictions, and that satisfying one fairness criterion can preclude another \citep{chouldechova2017fair,zehlike2021fairness}. Although auditing and mitigation toolkits have matured, real deployments still report per-person inconsistencies even when aggregate metrics appear acceptable \citep{salazar2024survey}. These limitations motivate approaches that move beyond correlational parity toward individual-level guarantees and pathway-aware reasoning about how protected attributes and proxies influence decisions.

\subsection{Counterfactual and Causal Fairness}
Causal perspectives define fairness via interventions on protected attributes and by restricting which causal pathways may influence decisions. Building on counterfactual explanations as a contestability tool~\cite{wachter2017counterfactual}, counterfactual fairness deems a decision fair if it remains unchanged when the protected attribute is counterfactually altered, typically using structural causal models and a separation of descendant from non-descendant features~\cite{kusner2017counterfactual}. Path-specific fairness blocks impermissible routes while allowing policy-accepted mediators \citep{nabi2018fair,chiappa2019path}, and causal reinterpretations of equalized odds connect group criteria to interpretable mechanisms \citep{zhang2018equality}. Since full causal graphs are rarely available, prior work also studies partial-graph guarantees, safe feature sets, and graph-free surrogates  that reduce sensitive influence without hand-crafted causal structure \citep{zuo2022counterfactual,zhao2022towards}, along with practical approximations such as adversarial removal of sensitive signal and counterfactual augmentation \citep{madras2018learning,garg2019counterfactual}. Generative approaches further improve counterfactual validity by producing edits under identity and support constraints \citep{chiappa2019path,van2021decaf,sattigeri2019fairness}. Overall, these methods provide principled individual-level guarantees in centralized settings, but many assume full feature access or trusted causal structure, limiting applicability under privacy and partial observability constraints.

\subsection{Fairness in Federated and Vertical Federated Learning}
Federated learning (FL) raises fairness challenges beyond centralized training due to client heterogeneity, non-IID data, and privacy constraints \cite{liu2024vertical, wasif2025empirical}; existing work largely targets group fairness via reweighting, constrained or min--max objectives, and aggregation modifications to improve worst-group outcomes \citep{papadaki2022minimax,chu2021fedfair,ezzeldin2023fairfed,wasif2025resfl}, and studies joint group/individual formulations that surface fairness--utility trade-offs under non-IID regimes \citep{yue2023gifair}. Vertical federated learning (VFL) extends FL to vertically partitioned features for overlapping individuals, training via representation sharing and secure aggregation \citep{yang2023survey,bonawitz2017practical}, but fairness in VFL remains comparatively underexplored, with early constrained approaches imposing group-level objectives despite limited access to sensitive attributes \citep{liu2024vertical}. Bringing causal and counterfactual fairness into VFL is particularly challenging because mediators and impermissible proxies may be distributed across parties and the causal structure is rarely known or agreed upon \citep{kusner2017counterfactual,nabi2018fair,chiappa2019path, loftus2018causal}; although graph-free proxy screening and generative counterfactual modeling provide useful ingredients, most prior work is centralized or lacks per-instance guarantees under realistic VFL threat models \citep{van2021decaf,zuo2022counterfactual,sattigeri2019fairness}, motivating server-centric mechanisms that separate permissible from impermissible pathways, and promote stable decisions under protected-attribute interventions.

\section{Proposed Approach: SCC-VFL}
\subsection{VFL System Setup}

We consider VFL with $m$ parties over $n$ shared entities. Each party $p$ holds disjoint features $x_i^{(p)}\in\mathbb{R}^{d_p}$ and shares no raw data; the server aligns IDs, aggregates party representations, and hosts the prediction head. A protected attribute $s_i\in\mathcal{S}$, binary or multi-category, is held by one trusted party or processed in an enclave and is never revealed to the server, which accesses only privacy-preserving summaries when needed. The supervised label is $y_i$, and the full feature vector is $x_i=(x_i^{(1)},\ldots,x_i^{(m)})$. Table~\ref{tab:notation} summarizes the notation.

\begin{table}[t]
\centering
\caption{Consolidated notation. Symbols are grouped by scope: system-level, per-party, and training-related.}
\label{tab:notation}
\small
\setlength{\tabcolsep}{4pt}
\begin{tabular}{@{}lll@{}}
\toprule
\textbf{Symbol} & \textbf{Owner} & \textbf{Description} \\
\midrule
$n,\, m$ & System & Number of shared entities; number of parties \\
$x_i^{(p)} \!\in\! \mathbb{R}^{d_p}$ & Party $p$ & Local feature vector for entity $i$ \\
$s_i \!\in\! \mathcal{S}$ & Trusted holder & Protected (sensitive) attribute for entity $i$ \\
$y_i$ & Server & Supervised label (classification or regression) \\
$\phi^{(p)}$, $h_{\phi}^{(p)}$ & Party $p$ & Local encoder parameters and mapping \\
$\theta$, $f_{\theta}$ & Server & Prediction head parameters and mapping \\
$z_i \!\in\! \mathbb{R}^{d_z}$ & Server & Fused representation; $z_s\!=\!\psi(s)$: DP sketch of $s$ \\
$N^{(p)},\, M^{(p)},\, P^{(p)}$ & Party $p$ & Non-descendant, mediator, and proxy index sets \\
$x_i^{cf,(p)}$, $z_i^{cf}$ & Party $p$ / Server & Counterfactual input and fused representation \\
$\varphi^{(p)}$, $\omega^{(p)}$ & Party $p$ & Generator $g_{\varphi}^{(p)}$ and adversary $a_{\omega}^{(p)}$ parameters \\
$\lambda,\,\alpha',\,\beta',\,\eta'$ & Server & Loss weights (SCC, identity, support, leakage) \\
\bottomrule
\end{tabular}
\end{table}

Each party $p$ operates a local encoder $h_{\phi}^{(p)}$ parameterized by $\phi^{(p)}$; the server hosts a prediction head $f_{\theta}$ parameterized by $\theta$. The model follows split learning: each party sends an activation $h_{\phi}^{(p)}(x_i^{(p)})$ to the server, which fuses them into
\begin{equation}
z_i=\operatorname{Fuse}\!\big(h_{\phi}^{(1)}(x_i^{(1)}),\dots,h_{\phi}^{(m)}(x_i^{(m)})\big)\in\mathbb{R}^{d_z},
\label{eq:fuse}
\end{equation}
and predicts $f_{\theta}(z_i)\in\mathbb{R}^{k}$ ($k{=}1$ for regression; $k$ classes for classification).  To define counterfactual interventions on $s$, each party's coordinates are partitioned into non-descendants $N^{(p)}$ (fixed), permitted mediators $M^{(p)}$ (may change), and impermissible proxies $P^{(p)}$ (should not transmit sensitive influence). Under $s\!\leftarrow\!s'$, we construct masked counterfactual inputs $x_i^{cf,(p)}$ with $N^{(p)}$ unchanged, $M^{(p)}$ edited on-support for $s'$, and proxy leakage from $P^{(p)}$ suppressed; the full counterfactual is $x_i^{cf}=(x_i^{cf,(1)},\ldots,x_i^{cf,(m)})$ with fused representation $z_i^{cf}$. The server compares $f_{\theta}(z_i)$ and $f_{\theta}(z_i^{cf})$ and penalizes differences via a consistency term (Section~\ref{subsec:server}), while the supervised objective is
\begin{eqnarray}
\mathcal{L}_{\text{task}}(\theta,\phi)
=\frac{1}{n}\sum_{i=1}^{n}\ell\!\big(f_{\theta}(z_i),\,y_i\big),
\label{eq:ltask}
\qquad
z_i
=\operatorname{Fuse}\!\big(h_{\phi}^{(1)}(x_i^{(1)}),\dots,h_{\phi}^{(m)}(x_i^{(m)})\big).
\label{eq:zi_def}
\end{eqnarray}

\begin{wrapfigure}{r}{0.45\textwidth}
\vspace{-2mm}
\centering
\resizebox{0.45\textwidth}{!}{%
\begin{tikzpicture}[
  font=\footnotesize,
  >=Latex,
  box/.style={draw=black!70, rounded corners, align=left, inner sep=5pt, line width=0.7pt},
  party/.style={box, text width=3cm, minimum height=1.05cm, draw=blue!60!black, fill=blue!8},
  server/.style={box, align=center, draw=black!70, fill=black!3},
  trusted/.style={box, draw=orange!70!black, fill=orange!10},
  fwd/.style={-Latex, draw=blue!60!black, line width=0.85pt},
  bwd/.style={-Latex, draw=blue!60!black, dashed, line width=0.85pt},
  aux/.style={-Latex, draw=orange!70!black, dotted, line width=0.95pt},
  lab/.style={font=\scriptsize, fill=white, inner sep=1pt}
]

\node[party, text width = 2cm] (p1) {\textbf{Party $P_1$}: raw features $x_i^{(1)}$ (local), encoder $h_{\phi}^{(1)}$};

\node[below=2.5mm of p1, font=\bfseries] (dots) {$\vdots$};

\node[party, below=2.5mm of dots, text width = 2cm] (pm) {\textbf{Party $P_m$}: raw features $x_i^{(m)}$ (local), encoder $h_{\phi}^{(m)}$};

\node[server, right=22mm of pm, text width=2.5cm] (srv)
{\textbf{Coordinator Server}\\Fuse $\rightarrow z_i$,\; head $f_{\theta}$\\outputs prediction $\hat{y}_i$};

\node[trusted, above=13mm of srv, text width=2.5cm] (T)
{\textbf{Trusted Holder:} sensitive attribute $s_i$;\; releases DP summary $\tilde{s}$};

\draw[fwd] (p1.east) -- node[pos=0.55, above, sloped, lab] {$h_{\phi}^{(1)}(x_i^{(1)})$} (srv.west);
\draw[fwd] (pm.east) -- node[pos=0.55, below, sloped, lab] {$h_{\phi}^{(m)}(x_i^{(m)})$} (srv.west);

\draw[bwd] (srv.west) -- node[pos=0.55, below, sloped, lab, font=\tiny] {gradients / updates} (p1.east);
\draw[bwd] (srv.west) -- node[pos=0.55, above, sloped, lab, font=\tiny] {gradients / updates} (pm.east);

\draw[aux] (T.south) -- node[pos=0.55, right, yshift=1.5mm, xshift=1.5mm, lab] {$\tilde{s}$ (DP summary)} (srv.north);

\node[font=\scriptsize, align=center, anchor=north, text width=2cm]
at ($(dots.south)+(1cm, 9mm)$)
{Raw $x_i^{(p)}$ and $s_i$ are never sent to the server.};

\end{tikzpicture}%
}
\caption{System model for VFL.}
\label{fig:vfl_system_model}
\vspace{-5mm}
\end{wrapfigure}

Figure~\ref{fig:vfl_system_model} illustrates the VFL system model, in which parties keep raw features local and exchange only intermediate representations with a coordinating server, while the sensitive attribute is held by a trusted party and accessed solely through a differentially private summary.

The remainder of this section introduces three components that together enable privacy-preserving, server-centric enforcement of individual-level counterfactual stability: selective mask discovery, masked counterfactual generation, and server-side consistency enforcement. We then describe the end-to-end privacy and security mechanisms that support deployment under realistic VFL threat models.

Specifically, SCC--VFL assumes a domain policy that divides the characteristics of each party into non-descendants \(N^{(p)}\) (held fixed under \(s\!\leftarrow\!s'\)), permissible mediators \(M^{(p)}\) (able to change on-support) and impermissible proxies \(P^{(p)}\) (descendants of \(s\) treated as disallowed pathways). Our contribution is enforcement given a declared policy, not deciding the policy: the DP-sketch screening in Section~\ref{subsec:mask} surfaces candidate descendants, while the final \(N/M/P\) assignment and its governance are specified and auditable via the protocol in Appendix~\ref{app:policy-audit}. Consistency is interpreted as promoting local decision stability under policy-permitted mediator edits (recourse-style changes), rather than claiming removal of all mediator pathway effects.

\subsection{Selective Descendant Discovery}
\label{subsec:mask}

Selective descendant discovery identifies, within each party, which local feature coordinates plausibly respond to interventions on the protected attribute after conditioning on other available information. For each party $p\in\{1,\dots,m\}$, it outputs a three-way mask $(N^{(p)},M^{(p)},P^{(p)})$ over local feature indices: non-descendants $N^{(p)}$, candidate mediators $M^{(p)}$, and stress-test proxies $P^{(p)}\subseteq M^{(p)}$. The goal is not causal identification, but a policy-aware, graph-free screening of features whose behavior is consistent with sensitivity to interventions on $s$. The procedure is graph-free and relies on conditional predictability evidence with a lightweight interventional validation. The server first produces a DP sketch $z_s=\psi(s)$, a low-dimensional encoding of $s$ with clipping and Gaussian noise calibrated to a privacy budget (Proposition~\ref{prop:dp_sketch}). Parties never receive raw $s$ and use only $z_s$ as an auxiliary covariate in local tests.

\begin{proposition}[DP guarantee for sketch release]
\label{prop:dp_sketch}
For a single feature coordinate $j$, let $c_j\in\mathbb{R}^{2\times K}$ be the contingency table of $s$ versus the discretised feature, and let neighboring datasets $D,D'$ differ by the addition or removal of one record. The mechanism that (i) clips $c_j$ to $\ell_2$-sensitivity $S$ via $\bar{c}_j = c_j \cdot \min(1, S/\|c_j\|_2)$ and (ii) releases $\tilde{c}_j = \bar{c}_j + \mathcal{N}(0,\,\sigma_{\mathrm{sketch}}^2 S^2 I)$ satisfies $(\varepsilon,\delta)$-differential privacy with
\[
\varepsilon \;=\; \frac{1}{\sigma_{\mathrm{sketch}}}\sqrt{2\ln(1.25/\delta)},
\]
for any $\delta\in(0,1)$, under the standard Gaussian mechanism analysis~\cite{dwork2014algorithmic}.
\end{proposition}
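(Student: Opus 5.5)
The plan is to reduce the statement to the classical Gaussian mechanism theorem of~\cite{dwork2014algorithmic} (Theorem~A.1 there). That theorem says: if a query $q$ has $\ell_2$-sensitivity $\Delta_2 = \sup_{D\sim D'}\|q(D)-q(D')\|_2$, then releasing $q(D)+\mathcal{N}(0,\sigma^2\Delta_2^2 I)$ is $(\varepsilon,\delta)$-DP whenever $\sigma \ge \sqrt{2\ln(1.25/\delta)}/\varepsilon$. Here the query is $q(D)=\bar{c}_j(D)$, the clipped contingency table flattened to a vector in $\mathbb{R}^{2K}$. Once I show $\Delta_2(q)\le S$, the noise $\mathcal{N}(0,\sigma_{\mathrm{sketch}}^2 S^2 I)$ has at least the required scale. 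Rearranging $\sigma_{\mathrm{sketch}}=\sqrt{2\ln(1.25/\delta)}/\varepsilon$ then gives the stated $\varepsilon$. Adding noise of larger variance than needed only helps, by post-processing and monotonicity of the Gaussian privacy loss in $\sigma$.

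\textbf{Step 1: sensitivity of the raw table.} Let $D'$ be $D$ with one record added or removed. That record falls into exactly one cell $(s,k)$ of $c_j$, so $c_j(D)-c_j(D')=\pm e_{(s,k)}$ and $\|c_j(D)-c_j(D')\|_2=1$.

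\textbf{Step 2: clipping is non-expansive.} The map $c\mapsto c\cdot\min(1,S/\|c\|_2)$ is exactly the Euclidean projection $\Pi_S$ onto the closed ball of radius $S$. Projection onto a closed convex set is $1$-Lipschitz, which is the standard firm non-expansiveness argument via the variational inequality $\langle c-\Pi c,\,u-\Pi c\rangle\le 0$. Hence
\[
\|\bar c_j(D)-\bar c_j(D')\|_2 \le \min\bigl(1,\;2S\bigr),
\]
where the $2S$ bound comes from both outputs lying in the ball of radius $S$.

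\textbf{Step 3: conclude.} Invoke the Gaussian mechanism with $\Delta_2\le S$ and the noise multiplier $\sigma_{\mathrm{sketch}}$.

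The main obstacle is that Step~2 alone does not give $\Delta_2\le S$ for every $S$: it gives $\Delta_2\le\min(1,2S)$, which is at most $S$ only when $S\ge 1$. I would therefore make explicit the implicit assumption $S\ge 1$, i.e.\ that $S$ is chosen at least as large as the true per-record $\ell_2$ contribution. This is the natural reading of ``clips to $\ell_2$-sensitivity $S$''. Alternatively, for $S<1$ one can re-scale the noise to $\sigma_{\mathrm{sketch}}\cdot\max(1,\Delta_2)$.

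A second caveat concerns the range of validity. The classical Dwork--Roth analysis is stated for $\varepsilon\in(0,1)$, so the formula holds as written for $\sigma_{\mathrm{sketch}}>\sqrt{2\ln(1.25/\delta)}$. For larger $\varepsilon$ I would either restrict the claim to that regime or appeal to the analytic Gaussian mechanism, which yields a valid (and tighter) $\varepsilon$ for every $\sigma_{\mathrm{sketch}}$, with the stated expression then serving as the standard bound in the small-$\varepsilon$ regime.

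Finally, I would remark that the guarantee is per coordinate $j$. Releasing sketches for many coordinates is handled by composition and is not part of this proposition, consistent with the paper's ``per-release'' framing.
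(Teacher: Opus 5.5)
You follow the same route as the paper: raw sensitivity $1$, clipped sensitivity at most $S$, then Theorem~A.1 of Dwork--Roth. Your Step~2 is more careful than the paper's. The paper justifies the clipped bound by saying each record ``contributes at most $S$ to the clipped output''. That is the per-example-clipping argument, and it does not apply when the aggregate table is clipped: $\|\bar{c}_j\|_2\le S$ alone only gives $2S$.

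\textbf{The $S\ge 1$ assumption is unnecessary.} Where you fall short is in adding $S\ge 1$ as a hypothesis (or changing the noise) because your bound $\min(1,2S)$ is loose. The proposition needs no such assumption. The idea you are missing is that $c_j$ is a table of nonnegative integer counts, so the real-vector bound is too pessimistic. For $S<1$, every nonzero table has $\|c\|_2\ge 1>S$ and is therefore clipped radially onto the sphere of radius $S$. Write the neighbors as $c$ and $c'=c+e_{(s,k)}$.
\begin{itemize}
\item If $c=0$, the clipped difference is $\|S\,e_{(s,k)}\|_2=S$.
\item Otherwise, $\langle c,c'\rangle=\|c\|_2^2+c_{(s,k)}\ge\|c\|_2^2$ and $\|c'\|_2\le\|c\|_2+1\le 2\|c\|_2$. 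So $u=c/\|c\|_2$ and $u'=c'/\|c'\|_2$ satisfy $\langle u,u'\rangle\ge 1/2$, hence $\|u-u'\|_2\le 1$, and the clipped difference $S\|u-u'\|_2$ is at most $S$.
\end{itemize}
Combined with your non-expansiveness bound for $S\ge 1$, this gives $\Delta_2=\min(1,S)\le S$ for every $S>0$. The maximum is attained by the empty-versus-one-record pair.

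\textbf{Your $\varepsilon<1$ caveat is correct.} This is a real omission in the paper's proof, which invokes Theorem~A.1 ``for any $\delta\in(0,1)$'' without that theorem's hypothesis $\varepsilon\in(0,1)$. The issue is not cosmetic. For $S\le 1$, the empty-versus-one-record pair produces two Gaussians whose means differ by exactly $S$, with per-coordinate standard deviation $\sigma_{\mathrm{sketch}}S$. On that pair the mechanism is therefore a Gaussian mechanism with noise multiplier $\sigma_{\mathrm{sketch}}$. Take, for example, $\sigma_{\mathrm{sketch}}=1/2$ and $\delta=10^{-5}$. The exact Gaussian privacy profile gives $\delta\approx 1.9\times 10^{-5}$ at the stated $\varepsilon\approx 9.69$, so the claimed guarantee fails. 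The proposition is true once you restrict to $\sigma_{\mathrm{sketch}}>\sqrt{2\ln(1.25/\delta)}$, or use an analytic-Gaussian calibration for larger $\varepsilon$. With that restriction, and the integrality argument above in place of your $S\ge 1$ assumption, your argument is a complete proof of the corrected statement.
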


\noindent\textbf{Scope of the DP claim.} Differential privacy applies \emph{only} to the released contingency-table sketch used for mediator/proxy screening (Eqs.~\eqref{eq:clip_counts}--\eqref{eq:gauss_mech} in Appendix~\ref{app:dp-secagg}). It does \emph{not} cover representations, gradients, or model parameters. We do not claim end-to-end DP for the full SCC-VFL training pipeline. Formal composition accounting across mask refreshes is deferred to future work.

For coordinate $j$ at party $p$, let $x_j^{(p)}$ be the target and $\tilde{x}_{\setminus j}^{(p)}$ the remaining local coordinates, optionally compressed by a small shared encoder. Party $p$ fits two predictors of $x_j^{(p)}$ from $\tilde{x}_{\setminus j}^{(p)}$: $\hat{g}^{(p)}_{\mathrm{w/}~z_s}$, which also takes $z_s$ as input, and $\hat{g}^{(p)}_{\mathrm{w/o}~z_s}$, which does not. The predictors share the same architecture. Using a held-out risk $\mathcal{R}$ (e.g., squared error or negative log-likelihood), the party computes a risk difference and a Hilbert-Schmidt Independence Criterion (HSIC) statistic \cite{gretton2005measuring}:
\begin{equation}
\underbrace{\Delta_j^{(p)}}_{\substack{\text{Risk gain} \\ \text{from adding }z_s}}
=
\underbrace{\mathcal{R}\big(\hat{g}^{(p)}_{\mathrm{w/}~z_s};\,x_j^{(p)} \mid \tilde{x}_{\setminus j}^{(p)}, z_s\big)}_{\text{Held-out risk with }z_s}
-
\underbrace{\mathcal{R}\big(\hat{g}^{(p)}_{\mathrm{w/o}~z_s};\,x_j^{(p)} \mid \tilde{x}_{\setminus j}^{(p)}\big)}_{\text{Held-out risk without }z_s},
\; \;  
\underbrace{\widehat{\mathrm{HSIC}}_j^{(p)}}_{\substack{\text{Residual} \\ \text{dependence on }z_s}}
=
\underbrace{\mathrm{HSIC}\big(r_j^{(p)}, z_s \mid \tilde{x}_{\setminus j}^{(p)}\big)}_{\text{Conditional contrast score}},
\label{eq:mask_scores}
\end{equation}
where $r_j^{(p)}$ are residuals from one predictor, for example $\hat{g}^{(p)}_{\mathrm{w/o}~z_s}$. Intuitively, $\widehat{\mathrm{HSIC}}_j^{(p)}$ measures residual dependence on $z_s$ after conditioning, while $\Delta_j^{(p)}$ captures how much predictability improves when $z_s$ is available.

\begin{tcolorbox}[title=Intuition: How Mask Discovery Works]
\textbf{Goal:} Identify which features at each party are statistically responsive
to interventions on sensitive attribute $s$.

\textbf{Challenge:} Most parties cannot directly observe $s$ due to privacy
constraints.

\textbf{Approach:} The server constructs a differentially private sketch $z_s$,
a blurred representation of $s$ that preserves aggregate patterns while protecting
individual values. Each party then tests whether access to $z_s$ improves prediction
of its local features, conditional on other available information.

\textbf{Outcome:} Features whose predictability changes when $z_s$ is available
are flagged as candidates for policy review. This ranking supports, but does not
determine, the $N/M/P$ partition used for counterfactual enforcement.
\end{tcolorbox}

The server collects $\big(\Delta_j^{(p)},\widehat{\mathrm{HSIC}}_j^{(p)}\big)$ across parties and forms the tri-partition by assigning small $\widehat{\mathrm{HSIC}}_j^{(p)}$ to $N^{(p)}$, large $\widehat{\mathrm{HSIC}}_j^{(p)}$ to $M^{(p)}$, and within $M^{(p)}$ designating proxies $P^{(p)}$ as those with large $|\Delta_j^{(p)}|$, indicating strong sensitivity to $s$ but limited task relevance. These scores serve as lightweight surrogates for full graph-based causal discovery and are intended to surface candidates for policy review rather than definitive causal roles. Finally, a lightweight interventional validation closes the loop. The server toggles $z_s$ to a target $z_{s'}$ and instructs parties to apply the masked generator. Mediators in $M^{(p)}$ are edited toward $s'$, while non-descendants in $N^{(p)}$ are copied exactly. If identity on $N^{(p)}$ is violated or mediators fail to respond, assignments are revised, yielding masks that are both statistically supported and intervention-consistent.

\subsection{Masked Counterfactual Generation}
\label{subsec:gen}

Given per-party masks $(N^{(p)},M^{(p)},P^{(p)})$, masked counterfactual generation edits only mediator coordinates in a manner consistent with the conditional distribution under a target sensitive embedding. The objective is to preserve identity on non-descendants and suppress proxy leakage while producing plausible, on-support edits. Each party $p$ trains a compact conditional generator $g_{\varphi}^{(p)}$ that takes local non-descendants $x_{N}^{(p)}$, current mediators $x_{M}^{(p)}$, a server-provided context vector $c$ summarizing cross-party information, and the target embedding $z_{s'}$. The generator outputs edited mediators $x_{M}^{cf,(p)}$ intended to lie on the support of $p(x_{M}^{(p)}\mid x_{N}^{(p)},c,z_{s'})$, while enforcing $x_{N}^{cf,(p)}=x_{N}^{(p)}$. Proxy features $x_{P}^{(p)}$ pass through unchanged but are guarded by an adversary that attempts to recover $z_s$ from $(x_{P}^{cf,(p)},x_{N}^{cf,(p)})$. The generators and encoders minimize the recovery success, thereby reducing the residual-sensitive signal without discarding features. Parties share only intermediate activations for $(x^{(p)},x^{cf,(p)})$ and scalar validity metrics with the server, where the SCC loss further discourages proxy-dependent prediction variation.

The per-party generator loss combines identity, support, and leakage-control terms. Writing $q_\varphi$ for a conditional variational autoencoder and $a_\omega^{(p)}$ for the $s$-adversary, we define
\begin{eqnarray}
\mathcal{L}_{\text{gen}}^{(p)}(\varphi,\omega)
&=&
\underbrace{\alpha \,\|x_{N}^{cf,(p)} - x_{N}^{(p)}\|_2^2}_{\text{Identity on non-descendants }N^{(p)}} \;+\;
\underbrace{\beta\, \mathbb{E}\!\left[-\log q_\varphi\!\left(x_{M}^{cf,(p)} \mid x_{N}^{(p)}, c, z_{s'}\right)\right]}_{\text{On-support mediator likelihood under }z_{s'}} \nonumber\\
&&\;+\;
\underbrace{\gamma\, \mathrm{MMD}\!\left(x_{M}^{cf,(p)}, \,\mathcal{D}_{M\mid N,c,z_{s'}}\right)}_{\text{Kernel two-sample support matching for mediators}}
\;-\;
\underbrace{\eta\, \mathbb{E}\!\left[\log\!\big(1 - a_\omega^{(p)}(x_{P}^{cf,(p)}, x_{N}^{cf,(p)})\big)\right]}_{\text{Proxy leakage suppression via }s\text{-adversary}},
\label{eq:lgen}
\end{eqnarray}
with nonnegative weights $(\alpha,\beta,\gamma,\eta)$. The first term enforces identity on non-descendants, the second and third promote on-support mediator edits, and the last suppresses residual sensitive predictability through proxies. An optional cycle-consistency term regenerates mediators from $z_{s'}$ back to $z_s$ to further stabilize edits. The generator operates only on mediator coordinates, limiting communication and computational overhead while helping preserve task accuracy. 

Appendix~\ref{app:worked_example} provides a worked credit example illustrating how mediator edits and SCC enforcement interact under a counterfactual intervention.

\begin{figure*}[t]
  \centering
  \includegraphics[width=0.9\textwidth]{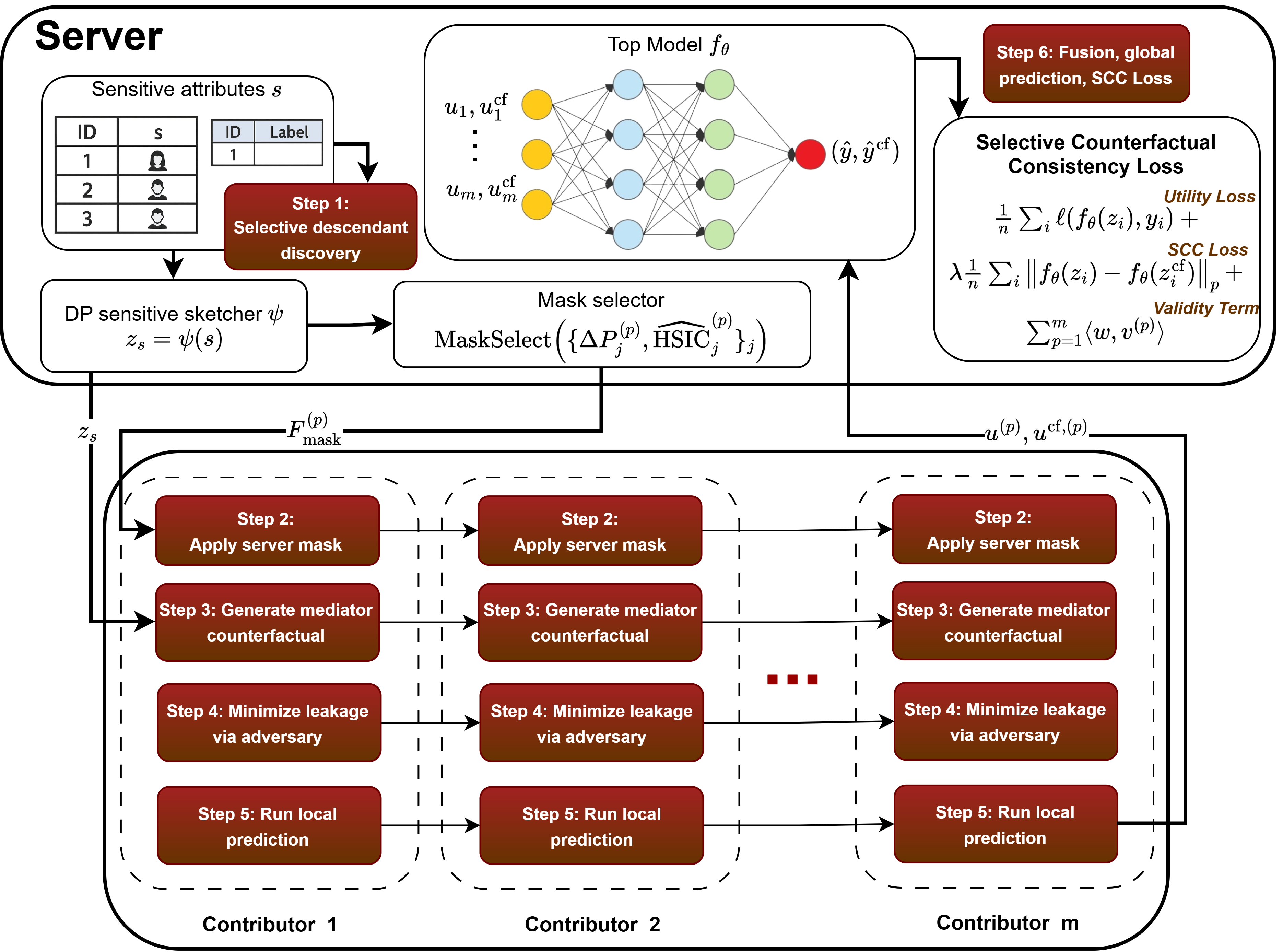}
  \caption{Overview of SCC--VFL. The server discovers descendants (Step 1), sketches sensitive attributes \(z_s\), selects a per-client mask via \(\Delta P_j^{(p)},\mathrm{HSIC}_j^{(p)}\), sends \(F_{\mathrm{mask}}^{(p)}\) to contributors for Steps 2--5, receives \(\{u^{(p)},u^{cf,(p)}\}\), and performs fusion with the SCC loss (Step 6).}
  \vspace{-1.5\baselineskip}
  \label{fig:sccvfl-server}
\end{figure*}

\subsection{Server-side Aggregation and Consistency}
\label{subsec:server}

Fairness is enforced at the server because it is the only point where the fused decision is formed. For each minibatch, the server instructs parties to compute local activations on both the original inputs and masked counterfactual inputs targeted to one or more sensitive alternatives. It then fuses the received activations to obtain $z$ and $z^{cf}$, evaluates the task loss on $z$, and applies a Selective Counterfactual Consistency (SCC) penalty that constrains prediction changes between $z$ and $z^{cf}$, measured on logits for classification and scalar outputs for regression. Importantly, SCC is applied to \emph{bounded, on-support} mediator edits produced by the generator, and is intended to prevent brittle decision flips under such permissible variability rather than to eliminate all mediator-driven effects.
Because counterfactuals differ only along policy-permitted mediators, with non-descendants fixed and proxies guarded, this penalty targets impermissible sensitive influence rather than suppressing legitimate causal pathways.

The server minimizes a joint objective combining supervised utility, per-example consistency, and per-party validity summaries: $\mathcal{V}^{(p)}_{\text{id}}$ (identity error on $N^{(p)}$), $\mathcal{V}^{(p)}_{\text{supp}}$ (mediator support adherence), and $\mathcal{V}^{(p)}_{\text{leak}}$ (proxy leakage measured by adversarial recovery). Formally,
\begin{equation}
\min_{\theta,\phi,\{\varphi^{(p)}\},\{\omega^{(p)}\}} \;
\underbrace{\mathcal{L}_{\text{task}}(\theta,\phi)}_{\substack{\text{Supervised utility}\\\text{on original inputs}}}
+
\underbrace{\lambda \frac{1}{n}\sum_{i=1}^n 
\big\|f_\theta(z_i)-f_\theta(z_i^{cf})\big\|_{p}}_{\substack{\text{SCC penalizing prediction}\\\text{changes under masked counterfactuals}}}
+
\underbrace{\sum_{p=1}^m \Big[
\alpha'\,\mathcal{V}^{(p)}_{\text{id}}
+ \beta'\,\mathcal{V}^{(p)}_{\text{supp}}
- \eta'\,\mathcal{V}^{(p)}_{\text{leak}}
\Big]}_{\substack{\text{Validity control: identity on $N^{(p)}$,}\\\text{on-support edits for $M^{(p)}$, reduced leakage via $P^{(p)}$}}},
\label{eq:server_obj}
\end{equation}

with weights $(\lambda,\alpha',\beta',\eta')$ chosen via validation on utility and stability metrics, where $p=1$ for classification (logits) and $p=2$ for regression.

The server backpropagates supervised and consistency gradients through the fusion operator and broadcasts split gradients to parties. Parties update local encoders and generators independently, while the server updates the prediction head, preserving separation of responsibilities and privacy constraints. The server also governs mask refresh. At fixed intervals, it recomputes descendant statistics under secure aggregation, applies false discovery rate control, and updates masks with hysteresis to prevent oscillations. 

\subsection{Privacy and Training Overview}
\label{subsec:privacy_impl}
SCC-VFL preserves sensitive information under the vertical federated threat model by never transmitting the protected attribute $s$ in the clear. The server computes a clipped and noisy sketch $z_s=\psi(s)$ that satisfies $(\varepsilon,\delta)$-DP for the released contingency tables (Proposition~\ref{prop:dp_sketch}; optionally computed inside a trusted execution environment when stronger isolation is required), and parties receive only $z_s$ and target variants $z_{s'}$, never raw sensitive values. 
\begin{wrapfigure}{r}{0.63\textwidth}
  \centering
  \vspace{-0.25\baselineskip}
  \includegraphics[width=\linewidth]{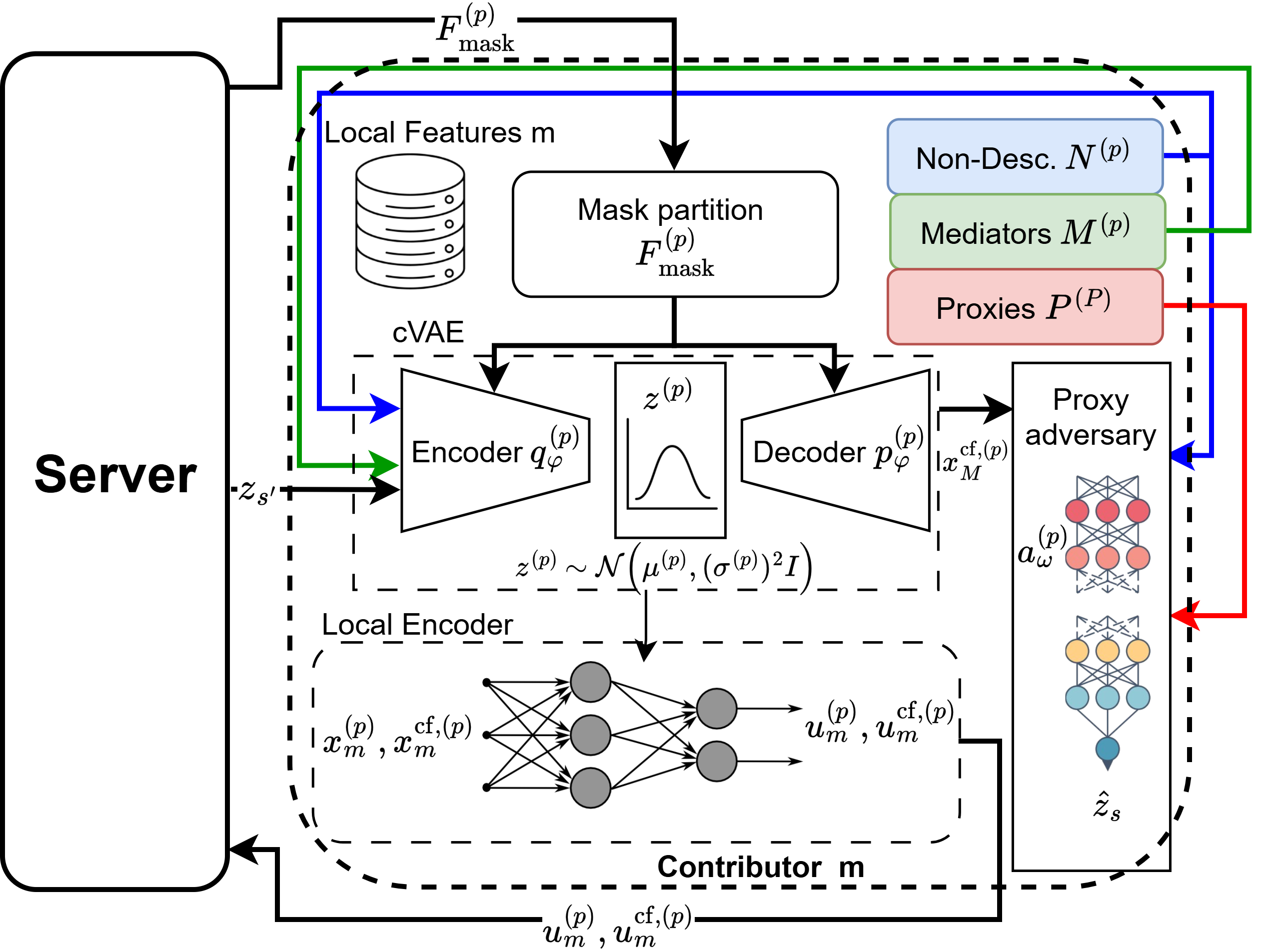}
  \caption{Contributor $m$ in SCC-VFL.}
  \label{fig:sccvfl-client}
  \vspace{-5mm}
\end{wrapfigure} 
We emphasize that this DP guarantee applies only to the sketch release and does not extend to representations, gradients, or model parameters exchanged during training. Descendant statistics and validity metrics are shared via secure aggregation, so the server observes only minibatch-level aggregates; only fixed-width representations (optionally randomized) cross authenticated channels. Audit logs record mask versions, privacy budgets, leakage metrics, and refresh events for governance review.
 
Training proceeds in three phases that may partially overlap: (1) selective descendant discovery for a few epochs with frozen encoders (securely aggregated statistics, FDR control, mask initialization), (2) party-local masked counterfactual generator training with identity, on-support, and leakage-control objectives while warm-starting the supervised head, and (3) joint optimization with the server-side Selective Counterfactual Consistency penalty, interleaving short fine-tuning rounds with periodic mask refresh events to stabilize the utility-stability trade-off. Hyperparameters are chosen from a narrow grid using validation curves of accuracy versus flip rate and consistency gap; for classification we compute consistency on pre-softmax logits (to avoid saturation), and for regression we use squared differences. Defining $c_i=\|f_\theta(z_i)-f_\theta(z_i^{cf})\|_p$ and $v^{(p)}=(\mathcal{V}^{(p)}_{\text{id}},\mathcal{V}^{(p)}_{\text{supp}},\mathcal{V}^{(p)}_{\text{leak}})$, the overall objective is
\begin{equation}
\min_{\Theta}\;\frac{1}{n}\sum_{i=1}^n \ell\!\big(f_{\theta}(z_i),y_i\big)
\;+\;\lambda\,\frac{1}{n}\sum_{i=1}^n c_i
\;+\;\sum_{p=1}^m\Big(\alpha'\mathcal{V}^{(p)}_{\text{id}}+\beta'\mathcal{V}^{(p)}_{\text{supp}}-\eta'\mathcal{V}^{(p)}_{\text{leak}}\Big),
\qquad
\Theta=\big(\theta,\phi,\{\varphi^{(p)}\},\{\omega^{(p)}\}\big).
\end{equation}

This decomposition keeps enforcement server-centric (stability on fused predictions) while parties ensure identity preservation on non-descendants, on-support mediator edits, and reduced proxy leakage. Figure~\ref{fig:sccvfl-server} summarizes the end-to-end SCC-VFL pipeline, and Figure~\ref{fig:sccvfl-client} illustrates Contributor $m$ (masking, mediator counterfactual generation, adversarial leakage suppression, and representation return for fusion).

\section{Experimental Setup}
\label{sec:evaluation}

The evaluation of SCC-VFL tests whether it improves individual-level counterfactual fairness while preserving utility and privacy in realistic VFL settings. We run experiments across three domains (banking, healthcare, and criminal justice), using consistent vertical partitions, standard metrics, and baselines to enable cross-domain comparison.

\subsection{Datasets}

\textbf{(1) Banking (German Credit, UCI):} We use German Credit \cite{hofmann1994statlog} with a binary default label. The sensitive attribute is age, binarized at 25 and used only for fairness evaluation (not as input). We form realistic vertical silos: Bank A (account/credit history), Employer/Payroll (employment/income), and an optional Bureau/Platform view (savings/collateral/third-party debtor), and evaluate under IID and non-IID splits.  \textbf{(2) Healthcare (UCI Heart Disease):} We use UCI Heart Disease \cite{janosi1988uci} with a binary disease label. The sensitive attribute is sex, treated as protected and omitted from inputs. We create three vertical views: Hospital A (clinical measurements/tests), Clinic B (demographic and structural risk factors), and Payer (remaining lab/administrative indicators), yielding clinically plausible mediators and proxies.  \textbf{(3) Criminal Justice (COMPAS Cox violent subset):} We use a COMPAS-style subset \cite{dressel2018accuracy} with a binary violent recidivism outcome. The sensitive attribute is race (African American vs.\ other), used only for fairness evaluation and excluded from inputs. Features are partitioned into Court system (charges/history), Community services (socioeconomic/supervision), and Agency/Platform (demographic screening), supporting both approximately IID and race-shifted non-IID splits.

We additionally evaluate SCC-VFL on a larger multi-group sensitive-attribute benchmark using \textbf{ACS Folktables} \cite{ding2021retiring}, with race (RAC1P) as a multi-category sensitive attribute and Income/Mobility as binary targets. Results under both IID and protected-attribute shift (Non-IID) settings are reported in Appendix~\ref{app:folktables}.

\subsection{Evaluation Metrics}
We evaluate each model using a fixed set of complementary metrics: \textbf{(1) Accuracy} (\%) $\uparrow$, standard predictive performance; \textbf{(2) LogLoss} $\downarrow$, cross-entropy over predicted probabilities; \textbf{(3) Selective Consistency Gap (SCG)} $\downarrow$, the mean $\|f(z)-f(z^{cf})\|$ over logits, capturing per-instance counterfactual stability; \textbf{(4) Flip Rate (FR, \%)} $\downarrow$, the fraction of samples whose predicted label changes under counterfactual edits (e.g., FR$=0.3\%$ corresponds to $\approx$3 flips per 1000 individuals); \textbf{(5) Attribute Inference Attack Success Rate (AIA SR, \%)} $\downarrow$, the post-hoc attacker success rate for predicting $s$ from latent representations under increasing attack strength; and \textbf{(6) Subspace-PGD Attack Success Rate (PGD SR, \%)} $\downarrow$, the success rate of an $\ell_\infty$-bounded adversary that perturbs mediator coordinates to flip predictions as $\epsilon$ grows. Together, these quantify utility (Accuracy, LogLoss), fairness (SCG, FR), and privacy leakage and robustness (attack success rates).

\subsection{Existing Baselines for Performance Comparison with SCC-VFL} 
We compare SCC-VFL against four baselines: \textbf{(1) No-mask adversarial debiasing (Adv-NoMask)} \citep{hong2021federated}, which applies a server-level adversary to remove $s$-signal from fused activations without counterfactual reasoning; \textbf{(2) Uniform counterfactuals (Uniform-CF)} \citep{garg2019counterfactual}, which uses a na\"{\i}ve generator that edits all coordinates indiscriminately; \textbf{(3) Policy-blind mask (Policy-blind Mask)} \citep{wang2022improving}, which performs descendant discovery without an explicit proxy partition; and \textbf{(4) Server-only consistency (Server-Consistency)} \citep{chen2024fair}, which directly penalizes prediction changes between original and ``flipped-$s$'' DP-sketch embeddings, without generator modules.

\subsection{Considered Adversaries} 
We stress-test all models with two adversarial probes: \textbf{(1) Attribute Inference Attack (AIA)} \citep{melis2019exploiting}, a post-hoc attacker that trains a separate classifier on frozen latent representations to recover $s$ under increasing attacker training budgets; and \textbf{(2) Subspace-constrained Projected Gradient Descent (Subspace-PGD)} \citep{pfrommer2023projected}, a gradient-based attack that perturbs only coordinates in a designated sensitive subspace within an $\ell_\infty$ ball of radius $\epsilon$ to flip the predicted label. Complete configurations and additional results for AIA and the Subspace-PGD variant are reported in Appendix~\ref{app:privacy_robustness}.

\subsection{Implementation Details}
All models are implemented in PyTorch. Each party encoder is a two-layer MLP (see Appendix~\ref{app:models-hparams}) with ReLU and dropout $d\in[0.05,0.10]$, and the server applies a linear prediction head (output dimension equals the number of classes). SCC-VFL uses a conditional VAE-style masked generator that perturbs only mediator coordinates, with counterfactual scale $\gamma\in[0.20,0.25]$; baselines use deterministic masked generators applied either to mediators or to all features. Mediators are selected via a DP-motivated discovery score that combines $\lvert \Delta_j^{(p)} \rvert$ and $\widehat{\mathrm{HSIC}}_j^{(p)}$ (Section~\ref{subsec:mask}); thresholds are set by percentile ranking, retaining the top $\rho_M=0.60$ coordinates as mediators $M^{(p)}$ and defining proxies $P^{(p)}$ as the top $\rho_P=0.50$ fraction within $M^{(p)}$. Training uses AdamW with learning rate $0.005$--$0.015$ and optional weight decay $5\times10^{-4}$ for up to 80 epochs for baselines and 150--300 epochs for SCC-VFL, with early stopping on a composite validation objective combining log loss, SCG, and FR. SCC-VFL logits are calibrated via temperature scaling on a held-out validation set (see Appendix~\ref{app:sensitivity}). All metrics are averaged over 30 random seeds and reported under both IID and non-IID client splits.

\subsection{Experimental Design}
We conduct three experiments: \textbf{(E1)} utility--fairness comparisons across all datasets using the core metrics; \textbf{(E2)} attack evaluations that measure sensitive leakage under party-side and server-side probes; and \textbf{(E3)} ablations that isolate the effects of mask discovery, masked generation, and the server-side consistency penalty on utility and stability. Together, these experiments show that SCC-VFL delivers counterfactual stability, improved privacy protection, and strong predictive performance across diverse VFL settings; (E2) and (E3) are conducted on the German Credit dataset under the IID split.

\begin{table*}[t]
  \caption{Performance comparison on UCI German Credit, UCI Heart Disease, and COMPAS Cox datasets under IID and Non-IID splits. Best values per block are highlighted in bold.}
  \label{tab:vfl_tabular_results}
  \centering
  \small
  \setlength{\tabcolsep}{5pt}
  \renewcommand{\arraystretch}{1.15}
  \begin{tabular}{lllcccc}
    \toprule
    Dataset & Split & Method
    & Acc $\uparrow$
    & LogLoss $\downarrow$
    & SCG $\downarrow$
    & FR (\%) $\downarrow$ \\
    \midrule
    \multirow{10}{*}{German Credit}
      & \multirow{5}{*}{IID}
        & Adv-NoMask
        & \meanstd{0.7176}{0.0217}
        & \meanstd{0.7131}{0.0666}
        & \meanstd{0.0561}{0.0076}
        & \meanstd{0.97}{0.35} \\
      &  & Uniform-CF
        & \meanstd{0.7311}{0.0197}
        & \meanstd{0.6305}{0.0423}
        & \meanstd{0.0891}{0.0094}
        & \meanstd{1.56}{0.77} \\
      &  & Policy-blind Mask
        & \meanstd{0.7322}{0.0214}
        & \meanstd{0.6114}{0.0433}
        & \meanstd{0.0532}{0.0053}
        & \meanstd{1.30}{0.75} \\
      &  & Server-Consistency
        & \textbf{\meanstd{0.7351}{0.0221}}
        & \meanstd{0.5747}{0.0409}
        & \meanstd{0.0218}{0.0029}
        & \meanstd{0.68}{0.48} \\
      &  & SCC-VFL (ours)
        & \meanstd{0.7224}{0.0244}
        & \textbf{\meanstd{0.5676}{0.0277}}
        & \textbf{\meanstd{0.0031}{0.0036}}
        & \textbf{\meanstd{0.09}{0.15}} \\
    \arrayrulecolor{gray!60}
    \cmidrule(lr){2-7}
    \arrayrulecolor{black}
      & \multirow{5}{*}{Non-IID}
        & Adv-NoMask
        & \meanstd{0.7200}{0.0291}
        & \meanstd{0.7143}{0.0676}
        & \meanstd{0.5688}{0.0306}
        & \meanstd{8.22}{1.75} \\
      &  & Uniform-CF
        & \meanstd{0.7293}{0.0229}
        & \meanstd{0.6376}{0.0510}
        & \meanstd{0.3652}{0.0297}
        & \meanstd{7.00}{1.54} \\
      &  & Policy-blind Mask
        & \textbf{\meanstd{0.7324}{0.0242}}
        & \meanstd{0.6155}{0.0454}
        & \meanstd{0.3185}{0.0270}
        & \meanstd{6.24}{1.35} \\
      &  & Server-Consistency
        & \meanstd{0.7300}{0.0230}
        & \meanstd{0.5904}{0.0460}
        & \meanstd{0.3335}{0.0273}
        & \meanstd{6.90}{1.39} \\
      &  & SCC-VFL (ours)
        & \meanstd{0.7243}{0.0246}
        & \textbf{\meanstd{0.5716}{0.0259}}
        & \textbf{\meanstd{0.0036}{0.0041}}
        & \textbf{\meanstd{0.12}{0.20}} \\
    \midrule
    \multirow{10}{*}{UCI Heart}
      & \multirow{5}{*}{IID}
        & Adv-NoMask
        & \meanstd{0.9712}{0.0053}
        & \textbf{\meanstd{0.0907}{0.0155}}
        & \meanstd{1.9882}{0.4007}
        & \meanstd{12.60}{3.01} \\
      &  & Uniform-CF
        & \meanstd{0.9530}{0.0080}
        & \meanstd{0.1806}{0.0105}
        & \meanstd{0.6694}{0.1049}
        & \meanstd{10.52}{2.61} \\
      &  & Policy-blind Mask
        & \meanstd{0.9488}{0.0067}
        & \meanstd{0.1941}{0.0083}
        & \meanstd{0.6067}{0.1171}
        & \meanstd{10.45}{2.88} \\
      &  & Server-Consistency
        & \meanstd{0.8844}{0.0211}
        & \meanstd{0.3442}{0.0165}
        & \meanstd{0.2869}{0.0319}
        & \meanstd{9.13}{1.48} \\
      &  & SCC-VFL (ours)
        & \textbf{\meanstd{0.9820}{0.0077}}
        & \meanstd{0.0910}{0.0136}
        & \textbf{\meanstd{0.2503}{0.0394}}
        & \textbf{\meanstd{2.11}{0.86}} \\
    \arrayrulecolor{gray!60}
    \cmidrule(lr){2-7}
    \arrayrulecolor{black}
      & \multirow{5}{*}{Non-IID}
        & Adv-NoMask
        & \meanstd{0.8538}{0.0386}
        & \meanstd{0.5798}{0.0605}
        & \meanstd{0.4646}{0.1562}
        & \meanstd{15.21}{5.57} \\
      &  & Uniform-CF
        & \meanstd{0.8557}{0.0340}
        & \meanstd{0.5804}{0.0418}
        & \meanstd{0.0935}{0.0182}
        & \meanstd{2.89}{1.46} \\
      &  & Policy-blind Mask
        & \meanstd{0.8500}{0.0390}
        & \meanstd{0.5883}{0.0489}
        & \meanstd{0.0248}{0.0064}
        & \meanstd{0.83}{0.67} \\
      &  & Server-Consistency
        & \meanstd{0.8145}{0.0432}
        & \meanstd{0.6282}{0.0555}
        & \meanstd{0.1840}{0.0522}
        & \meanstd{7.86}{2.95} \\
      &  & SCC-VFL (ours)
        & \textbf{\meanstd{0.9492}{0.0153}}
        & \textbf{\meanstd{0.4594}{0.0497}}
        & \textbf{\meanstd{0.0092}{0.0058}}
        & \textbf{\meanstd{0.05}{0.19}} \\
    \midrule
    \multirow{10}{*}{COMPAS Cox}
      & \multirow{5}{*}{IID}
        & Adv-NoMask
        & \textbf{\meanstd{0.9710}{0.0021}}
        & \textbf{\meanstd{0.0499}{0.0021}}
        & \meanstd{1.0357}{0.2266}
        & \meanstd{0.66}{0.46} \\
      &  & Uniform-CF
        & \meanstd{0.9705}{0.0034}
        & \meanstd{0.0548}{0.0016}
        & \meanstd{0.3839}{0.0747}
        & \meanstd{0.56}{0.57} \\
      &  & Policy-blind Mask
        & \meanstd{0.9705}{0.0035}
        & \meanstd{0.0548}{0.0021}
        & \meanstd{0.3901}{0.0812}
        & \meanstd{0.55}{0.55} \\
      &  & Server-Consistency
        & \meanstd{0.9706}{0.0034}
        & \meanstd{0.0553}{0.0021}
        & \meanstd{0.3191}{0.0360}
        & \meanstd{0.63}{0.64} \\
      &  & SCC-VFL (ours)
        & \meanstd{0.9697}{0.0042}
        & \meanstd{0.0514}{0.0024}
        & \textbf{\meanstd{0.0266}{0.0049}}
        & \textbf{\meanstd{0.02}{0.03}} \\
    \arrayrulecolor{gray!60}
    \cmidrule(lr){2-7}
    \arrayrulecolor{black}
      & \multirow{5}{*}{Non-IID}
        & Adv-NoMask
        & \meanstd{0.9758}{0.0015}
        & \textbf{\meanstd{0.0428}{0.0017}}
        & \meanstd{1.0712}{0.3014}
        & \meanstd{0.57}{0.34} \\
      &  & Uniform-CF
        & \meanstd{0.9750}{0.0047}
        & \meanstd{0.0485}{0.0027}
        & \meanstd{0.3812}{0.0694}
        & \meanstd{0.41}{0.34} \\
      &  & Policy-blind Mask
        & \meanstd{0.9755}{0.0032}
        & \meanstd{0.0485}{0.0018}
        & \meanstd{0.3751}{0.0793}
        & \meanstd{0.34}{0.28} \\
      &  & Server-Consistency
        & \textbf{\meanstd{0.9759}{0.0014}}
        & \meanstd{0.0489}{0.0019}
        & \meanstd{0.3238}{0.0441}
        & \meanstd{0.34}{0.28} \\
      &  & SCC-VFL (ours)
        & \meanstd{0.9721}{0.0065}
        & \meanstd{0.0453}{0.0035}
        & \textbf{\meanstd{0.0255}{0.0044}}
        & \textbf{\meanstd{0.02}{0.03}} \\
    \bottomrule
  \end{tabular}
\end{table*}


\begin{figure}[t]
\centering
\begin{tikzpicture}

\begin{groupplot}[
  group style={group size=2 by 1, horizontal sep=22mm},
  width=0.46\linewidth,          
  height=0.34\linewidth,         
  grid=both,
  grid style={line width=.1pt, draw=black!15},
  major grid style={line width=.2pt, draw=black!25},
  tick label style={font=\scriptsize},
  label style={font=\scriptsize},
  every axis plot/.append style={
    line width=1.05pt,
    mark options={mark size=2.6pt}
  }
]

\nextgroupplot[
  xlabel={AIA attacker epochs (attack strength)},
  ylabel={Average attack success rate},
  xmin=8, xmax=82,
  ymin=0.52, ymax=0.63,
  xtick={10,20,40,80},
  ytick={0.52,0.54,0.56,0.58,0.60,0.62},
  legend to name=attacklegend,
  legend columns=5,
  legend style={
    font=\footnotesize, 
    draw=black!20,
    fill=white,
    cells={anchor=west},
    /tikz/column sep=6pt
  }
]

\addplot+[color=blue, solid, mark=o]
  coordinates {(10,0.540) (20,0.561) (40,0.590) (80,0.625)};
\addlegendentry{Adv-NoMask}

\addplot+[color=red, dashed, mark=square*]
  coordinates {(10,0.527) (20,0.555) (40,0.580) (80,0.622)};
\addlegendentry{Uniform-CF}

\addplot+[color=teal, dotted, mark=triangle*]
  coordinates {(10,0.533) (20,0.555) (40,0.585) (80,0.621)};
\addlegendentry{Policy-blind Mask}

\addplot+[color=black, dashdotted, mark=diamond*]
  coordinates {(10,0.528) (20,0.535) (40,0.599) (80,0.623)};
\addlegendentry{Server-Consistency}

\addplot+[color=purple!70!black, solid, mark=star]
  coordinates {(10,0.531) (20,0.535) (40,0.562) (80,0.579)};
\addlegendentry{SCC-VFL (ours)}

\nextgroupplot[
  xlabel={PGD $\epsilon$ (attack strength)},
  ylabel={Average attack success rate},
  xmin=0.02, xmax=0.205,
  ymin=0.00, ymax=0.23,
  scaled x ticks=false,
  xticklabel style={/pgf/number format/fixed, /pgf/number format/precision=3},
  xtick={0.025,0.050,0.100,0.200},
  scaled y ticks=false,
  ytick={0,0.05,0.10,0.15,0.20,0.25},
  yticklabel style={/pgf/number format/fixed, /pgf/number format/precision=2},
]

\addplot+[color=blue, solid, mark=o, forget plot]
  coordinates {(0.025,0.030) (0.050,0.060) (0.100,0.116) (0.200,0.217)};

\addplot+[color=red, dashed, mark=square*, forget plot]
  coordinates {(0.025,0.023) (0.050,0.051) (0.100,0.103) (0.200,0.206)}; 

\addplot+[color=teal, dotted, mark=triangle*, forget plot]
  coordinates {(0.025,0.027) (0.050,0.052) (0.100,0.104) (0.200,0.201)};

\addplot+[color=black, dashdotted, mark=diamond*, forget plot]
  coordinates {(0.025,0.017) (0.050,0.035) (0.100,0.075) (0.200,0.150)};

\addplot+[color=purple!70!black, solid, mark=star, forget plot]
  coordinates {(0.025,0.013) (0.050,0.024) (0.100,0.045) (0.200,0.093)};

\end{groupplot}

\node[anchor=south] at ($ (group c1r1.north)!0.5!(group c2r1.north) + (0,4mm) $)
{\pgfplotslegendfromname{attacklegend}};

\node[anchor=north, font=\footnotesize] at ($ (group c1r1.south) + (0,-7mm) $)
{(a) Attribute Inference Attack (AIA)};
\node[anchor=north, font=\footnotesize] at ($ (group c2r1.south) + (0,-7mm) $)
{(b) Subspace-PGD};

\end{tikzpicture}
\caption{Average attack success rate for AIA and Subspace-PGD across attack strengths (30 seeds).}
\label{fig:attack_plots}
\end{figure}

\section{Experimental Results \& Analyses}

\subsection{Utility--Fairness Trade-offs Across Datasets}
We first compare \textsc{SCC-VFL} and all baselines on the joint utility--fairness metrics across the three datasets under both IID and non-IID partitions (Table~\ref{tab:vfl_tabular_results}). Across domains, \textsc{SCC-VFL} either matches or is very close to the best baseline in Accuracy and LogLoss while sharply improving Selective Consistency Gap (SCG) and Flip Rate (FR). On German Credit, \textsc{SCC-VFL} slightly trails Server-Consistency in Accuracy by about one percentage point but attains the lowest LogLoss and reduces SCG and FR by roughly an order of magnitude in both IID and non-IID settings. On UCI Heart, \textsc{SCC-VFL} achieves the highest Accuracy in both splits and essentially ties the best LogLoss, while cutting SCG and FR from double digits to around \(0.25\) / \(2.1\) in IID and almost zero in non-IID. On COMPAS Cox, \textsc{SCC-VFL} maintains Accuracy within \(0.3\) percentage points of the strongest baselines, with only a minor LogLoss gap, yet collapses SCG to about \(0.026\) and FR to \(0.02\) compared to \(0.3\)--\(1.0\) SCG and \(0.3\)--\(0.7\) FR for all baselines. non-IID partitions intensify these gaps: baselines often suffer large SCG and FR spikes when distributions shift, whereas \textsc{SCC-VFL} keeps SCG in the \(0.003\)--\(0.026\) range and FR near zero while staying competitive in utility.

These patterns follow from how \textsc{SCC-VFL} enforces counterfactual stability. Adv-NoMask operates only at the fused representation, so it can suppress some global \(s\)-signal but lacks instance-level control over which features move, leaving many per-example flips and high SCG. Uniform-CF perturbs all coordinates, often preserving or slightly improving utility but introducing off-support edits that inflate SCG and FR, especially under non-IID splits. Policy-blind Mask performs better by focusing edits on descendants, yet treating all descendants as acceptable mediators lets proxies transmit residual sensitive signal and limits stability gains. Server-Consistency constrains predictions under ``flipped-\(s\)'' embeddings but has no explicit generator or identity constraint, so it cannot guarantee that only permissible mediators move or that non-descendants stay fixed. In contrast, \textsc{SCC-VFL} combines selective mask discovery, mediator-only cVAE edits, and a per-example SCC loss at the server, which jointly preserves on-support edits, freezes non-descendants, and collapses proxy influence. This architecture yields models that keep task performance high while producing much lower SCG and FR across datasets and under both IID and non-IID VFL regimes.

\subsection{Privacy Leakage and Robustness Analysis}
We next evaluate privacy leakage and robustness using the Attribute Inference Attack (AIA) and Subspace-constrained Projected Gradient Descent (Subspace-PGD). As Fig.~\ref{fig:attack_plots} shows, as attacker strength increases, baselines cluster around \(62\%\) AIA success at 80 epochs, whereas \textsc{SCC-VFL} stays at \(57.9\%\), a \(4\)--\(5\) point reduction in balanced attack accuracy. For Subspace-PGD, which perturbs only mediator coordinates, \textsc{SCC-VFL} consistently attains the lowest success rate, from \(1.0\%\) at \(\epsilon=0.02\) to \(9.3\%\) at \(\epsilon=0.20\), about half of Adv-NoMask (\(21.7\%\)) and below all other baselines. These gains stem from combining selective mask discovery (to isolate mediators and proxies), masked generation (editing only mediators while enforcing identity on non-descendants and suppressing proxy leakage), and an SCC loss that flattens the decision boundary along mediator directions, reducing usable \(s\)-signal and strengthening resistance to mediator-space attacks relative to Adv-NoMask, Uniform-CF, Policy-blind Mask, and Server-Consistency.

\subsection{Ablation Analysis }
Table~\ref{tab:scc_vfl_ablation_german} studies the contribution of each component of \textsc{SCC-VFL}. The full model attains the best overall trade-off, with the lowest SCG (\(0.0045\)) and FR (\(0.11\%\)) while keeping Accuracy and LogLoss on par with or slightly better than the ablations. Removing mask discovery and treating all features as mediators yields marginally lower LogLoss but roughly doubles SCG and more than triples FR, showing that selective masks help target edits and avoid unnecessary flips. Removing the generator (leaving only server-side consistency) substantially degrades stability, with SCG increasing to \(0.0233\) and FR to \(0.94\%\), indicating that explicit on-support mediator counterfactuals are important for effective consistency enforcement. Finally, dropping the consistency term while keeping masks and the generator retains good utility but leaves SCG and FR significantly higher than the full model, confirming that the per-example SCC loss at the server is necessary to translate selective edits into stable predictions. We also report runtime and communication overhead measurements in Appendix~\ref{app:runtime-comm}.

\begin{table}[t]
  \caption{Ablation of SCC-VFL components on German Credit (IID split).}
  \label{tab:scc_vfl_ablation_german}
  \centering
    \begin{tabular}{@{}lcccc@{}}
      \toprule
      Method & Acc $\uparrow$ & LogLoss $\downarrow$ & SCG $\downarrow$ & FR (\%) $\downarrow$ \\
      \midrule
      SCC-VFL (full) &
        \meanstd{0.7186}{0.022} &
        \meanstd{0.5737}{0.027} &
        \textbf{\meanstd{0.0045}{0.004}} &
        \textbf{\meanstd{0.11}{0.180}} \\
      w/o mask discovery &
        \textbf{\meanstd{0.7220}{0.024}} &
        \textbf{\meanstd{0.5713}{0.026}} &
        \meanstd{0.0094}{0.016} &
        \meanstd{0.36}{0.60} \\
      w/o generator &
        \meanstd{0.7089}{0.017} &
        \meanstd{0.5761}{0.027} &
        \meanstd{0.0233}{0.005} &
        \meanstd{0.94}{0.54} \\
      w/o consistency &
        \meanstd{0.7174}{0.022} &
        \meanstd{0.5747}{0.027} &
        \meanstd{0.0076}{0.000} &
        \meanstd{0.63}{0.100} \\
      \bottomrule
    \end{tabular}%
\end{table}

\subsection{Discussion \& Implications}
\label{sec:discussion}

\textbf{What SCC-VFL does and does not guarantee.}
Our results show that SCC-VFL substantially reduces prediction instability under counterfactual interventions on the sensitive attribute. This is a \emph{consistency} property: for a given individual, predictions remain stable when the protected attribute is counterfactually changed, non-descendant features are held fixed, and only policy-permitted mediators are allowed to vary. Viewed this way, mediator edits function as recourse-style interventions rather than arbitrary perturbations~\cite{ustun2019actionable, wachter2017counterfactual}, and our guarantees inherit standard caveats about counterfactual-based fairness~\cite{barocas2020hidden, kasirzadeh2021use}.

Consistency, however, does not imply justice~\cite{selbst2019fairness}. A system can be consistently discriminatory if the underlying policy specification encodes biased or unjustified judgments about which causal pathways are deemed permissible~\cite{selbst2019fairness, weerts2023algorithmic}. SCC-VFL enforces the provided policy model rather than correcting it. Moreover, counterfactual stability captures only one dimension of fairness and does not ensure group-level parity properties such as equalized odds or demographic parity, nor does it resolve intersectional concerns. Deployments should therefore complement SCC-VFL with group-level audits and substantive review of whether the policy model itself reflects defensible values.

\textbf{Individual stability as an auditable fairness target.}
The results position individual-level counterfactual stability as a practically auditable fairness target for VFL. Metrics such as Selective Consistency Gap (SCG) and Flip Rate (FR) directly quantify whether a person’s prediction changes under an intervention on the protected attribute when non-descendants are held fixed. Across datasets, SCC-VFL achieves large reductions in SCG and FR while maintaining competitive Accuracy and LogLoss, indicating that it removes arbitrary per-example sensitivity rather than trading utility for coarse group-level constraints~\cite{binns2020apparent}. This aligns with FAccT goals of accountability and contestability, since stability can be assessed at the individual level and aggregated for monitoring.

\textbf{Robustness under heterogeneity and shift.}
The non-IID results highlight how institutional heterogeneity and distribution shift can amplify proxy pathways and off-support edits, leading baseline methods to exhibit sharp stability degradation. SCC-VFL remains more stable by restricting edits to mediator coordinates, enforcing identity on non-descendants, and applying a per-example consistency penalty that flattens the decision boundary along the intended intervention. The observed reductions in attribute inference and Subspace-PGD attack success suggest reduced usable sensitive signal in shared representations under the evaluated threats, though these results should be interpreted as empirical robustness rather than formal privacy guarantees. A key implication is that policy choices embedded in the mask should be documented and audited, including shift-aware reporting of SCG and FR, to ensure that enforced intervention semantics align with domain requirements.

\subsection{Limitations and Broader Considerations}
\label{sec:limitations}

\noindent \textbf{Technical scope.}
We evaluate SCC-VFL on tabular VFL benchmarks with discrete protected attributes and a fixed policy specification. Extending the approach to high-dimensional modalities (text, images), continuous protected attributes, and settings with frequent policy updates requires additional design for (i) scalable mask discovery and (ii) valid, on-support counterfactual generation in richer feature spaces. The proposed mask discovery is graph-free and relies on statistical signals in order to support the enforcement of a declared policy rather than to recover ground-truth causal structure.

\noindent \textbf{Policy dependence and governance.}
SCC-VFL enforces counterfactual stability relative to the chosen $N/M/P$ specification; it does not determine which pathways \emph{are normatively permissible}. If the specification is incomplete or outdated, the model can be stable while still misaligned with institutional or legal intent. For deployment, we recommend documenting the policy rationale, tracking mask versions, and pairing SCC metrics (e.g., SCG/FR) with routine outcome monitoring, including intersectional slices when feasible, following model-reporting practices~\cite{mitchell2019model}.

\noindent \textbf{Privacy and accountability boundaries.} 
The formal DP guarantee applies only to the sketch used for mask discovery (Proposition~\ref{prop:dp_sketch}), not to representations, gradients, or model parameters; composition across mask refreshes is capped but not tracked via a formal composition accountant (e.g., R{\'e}nyi DP). Beyond the sketch, privacy evidence is empirical and does not rule out inference under stronger threat models. We position SCC-VFL as an accountability mechanism to be used alongside established privacy protections, audits, and external review.

\section{Conclusion and Future Work}

This paper introduced \textsc{SCC-VFL}, a server-centric framework for enforcing individual-level counterfactual stability in vertical federated learning while preserving predictive performance. The approach integrates three components: selective mask discovery that partitions each party’s features into non-descendants, mediators, and proxies; masked counterfactual generation that edits only policy-permitted mediators while preserving identity on non-descendants and suppressing proxy leakage; and a server-side selective counterfactual consistency loss that penalizes prediction changes under on-support mediator interventions. Across banking, healthcare, and criminal justice datasets, \textsc{SCC-VFL} achieves strong predictive utility while substantially reducing the selective consistency gap, flip rate, and adversarial attack success under both attribute inference and subspace-constrained PGD, in IID and non-IID vertical settings.

Future work includes extending \textsc{SCC-VFL} along methodological, privacy, and deployment dimensions. An immediate direction is supporting higher-dimensional modalities such as text or imaging, and more complex vertical consortia with many parties, partial entity overlap, and asynchronous participation. On the privacy side, integrating formal differential privacy accounting and secure computation primitives into the mask discovery stage would provide end-to-end privacy guarantees beyond the current functional protections. Another avenue is adaptive and multi-attribute mask discovery that can track distribution shift and handle multiple protected attributes simultaneously. Finally, the masked generators developed for counterfactual stability could also support actionable recourse and auditing~\cite{ustun2019actionable}, where policy-compliant mediator edits are exposed to stakeholders as interpretable explanations or intervention recommendations.

\section*{Code and data availability}
The implementation is available at \url{https://github.com/dawoodwasif/SCC-VFL}. The datasets used in this work are publicly available from their official sources; links and preprocessing scripts are provided in the repository.

\section*{Generative AI Usage}
We used generative models in two limited and transparent ways. First, a generative model is used as a \emph{method component}: the masked conditional VAE that produces mediator-only counterfactual edits within the proposed SCC-VFL framework. Second, standard generative tools were used during manuscript preparation for formatting assistance and language polishing. Generative models were not used to generate scientific hypotheses, select evaluation metrics, design experiments, interpret results, or draw conclusions. No external generative system was used to create, label, or augment the experimental datasets, and no model outputs were used as ground-truth labels for evaluation. All reported quantitative results, tables, and figures are produced by our implemented training and evaluation code on the stated datasets and splits.

\section*{Ethical Considerations}
This work studies fairness and privacy risks in vertical federated learning, where disjoint feature holders collaborate without sharing raw features. The protected attribute $s$ is treated as sensitive: it is not broadcast to parties and is used only to define and evaluate the counterfactual intervention underlying Selective Consistency Gap (SCG) and Flip Rate (FR). Our approach encodes an explicit normative choice about which descendants of $s$ are permissible mediators versus impermissible proxies. These choices can shape downstream incentives and institutional behavior and should therefore be defined using domain policy, legal guidance, and stakeholder input. We emphasize that deployments should document these choices and audit resulting models using both utility metrics and stability metrics such as SCG and FR across relevant subgroups and distribution shifts.

\section*{Adverse Impacts}
A stability objective can be misused to justify suppressing legitimate signal, to enforce overly rigid decisions, or to mask discrimination by optimizing a narrow metric. Errors in mediator or proxy specification can unintentionally amplify harm by freezing features that should change or allowing pathways that should have been blocked. While our privacy results show reduced post hoc leakage under the evaluated attribute inference attacks and increased robustness under Subspace-PGD, the method does not provide a formal privacy guarantee for training data, representations, gradients, or model parameters. Adversaries outside our threat model may still infer sensitive information. As with many fairness interventions, the risks of mis-specification or misuse are most likely to be borne by protected or marginalized groups, underscoring the need for external accountability. To mitigate these risks, we recommend documenting intervention semantics and policy specifications, reporting SCG and FR alongside standard group fairness and utility metrics under non-IID and temporal shifts, and monitoring deployed systems with clear review and rollback processes.

\begin{acks}
This work was supported in part by the U.S. Army Research Office (ARO) under Award No.~W911NF-24-2-0241 and in part by the U.S. National Science Foundation (NSF) under Grant No.~2416728. The views and conclusions contained in this paper are those of the authors and should not be interpreted as representing the official policies, either expressed or implied, of the U.S. Army Research Office, the U.S. Government, the National Science Foundation, or any other funding agency. Any opinions, findings, conclusions, or recommendations expressed in this material are those of the authors and do not necessarily reflect the views of the funding agencies.
\end{acks}

\bibliographystyle{ACM-Reference-Format}
\bibliography{references}

\newpage
\begin{center}
  \LARGE \textbf{APPENDICES}
\end{center}
\vspace{1em}

\appendix

\counterwithin{figure}{section}
\counterwithin{table}{section}
\renewcommand{\thefigure}{\thesection.\arabic{figure}}
\renewcommand{\thetable}{\thesection.\arabic{table}}


\section{Datasets and Client Partitions}
\label{app:data-partitions}

We use three tabular benchmarks with a held-out protected attribute (used only for fairness/privacy evaluation). Dataset-level details are in Table~\ref{tab:data-summary}; vertical partitions are in Table~\ref{tab:client-partitions}.

\subsection{Overview of datasets}
Preprocessing and protected-attribute operationalization choices below are documented in the spirit of dataset transparency practices~\cite{gebru2021datasheets}.

\textbf{German Credit (Banking).} Binary credit risk (bad vs good); protected attribute: age binarised at 25; age excluded from inputs.

\noindent\textbf{UCI Heart Disease (Healthcare).} Binary heart disease presence; protected attribute: sex; sex excluded from inputs; remaining features standardised.

\noindent\textbf{COMPAS Cox violent subset (Criminal justice).} Binary violent recidivism within horizon; protected attribute: race (African American vs other); race excluded from inputs; categorical encoded, numerics imputed then scaled.

\subsection{Client partitions and VFL views}

Each dataset is split into three feature-holding clients plus a coordinating server; clients send only encodings. Concrete per-dataset client feature blocks are in Table~\ref{tab:client-partitions}.

\begin{table}[H]
  \caption{Dataset summary. Here $n$ is the number of samples and $d$ is the number of numerical features after preprocessing. Protected attributes are used only for fairness and privacy evaluation, and are never provided as inputs to the prediction models.}
  \label{tab:data-summary}
  \centering
  \small
  \renewcommand{\arraystretch}{1.15}
  \resizebox{\linewidth}{!}{%
  \begin{tabular}{lcccccc}
    \toprule
    Dataset & Domain & $n$ & $d$ & Protected & Label & Splits \\
    \midrule
    German Credit & Banking & $\approx 1000$ & $\approx 20$ & Age ($< 25$ vs $\ge 25$) &
    Default (bad vs good) & IID and non-IID \\
    Heart Disease & Healthcare & $\approx 300$ & 13 & Sex (female vs male) &
    Presence of heart disease & IID and non-IID \\
    COMPAS Cox & Criminal justice & $\approx 6000$ & $\approx 20$ &
    Race (African American vs other) &
    Violent recidivism event & IID and non-IID \\
    \bottomrule
  \end{tabular}%
  }
\end{table}
\begin{table}[t]
  \caption{Vertical client partitions for each dataset. Each client holds a disjoint subset of feature groups and participates in the VFL protocol by sending local encodings to the server. The server coordinates training and aggregation but never observes raw features.}
  \label{tab:client-partitions}
  \centering
  \small
  \setlength{\tabcolsep}{4pt}
  \renewcommand{\arraystretch}{1.15}
  \begin{tabular}{llp{7.3cm}}
    \toprule
    Dataset & Client & Feature groups (examples) \\
    \midrule
    German Credit &
    Bank A &
    Account status, credit history, credit amount, credit duration, instalment rate \\
    &
    Employer / Payroll &
    Employment duration, job category, housing, number of dependents \\
    &
    Bureau / Platform &
    Savings accounts, collateral, other debtors or guarantors, foreign worker flag \\
    \midrule
    Heart Disease &
    Hospital A &
    Core clinical measures such as resting blood pressure, serum cholesterol, ST depression, maximum heart rate \\
    &
    Clinic B &
    Demographics and structural risk factors including age, chest pain type, vessel count, slope category \\
    &
    Payer &
    Fasting blood sugar, resting ECG findings, thallium scan result, simple administrative codes \\
    \midrule
    COMPAS Cox &
    Court system &
    Charge degree, offense type, prior counts, custody history, incarceration days \\
    &
    Community services &
    Employment status, education level, supervision and program participation indicators \\
    &
    Agency / Platform &
    Screening age, marital status, residence stability and other demographic records \\
    \bottomrule
  \end{tabular}
\end{table}

\subsection{Splitting strategies and non-IID settings}

We use stratified train/val/test splits (70/30 train/test; within train, 20\% validation). IID: stratified random by individuals. non-IID: shift protected-attribute distribution between train/test (race shift for COMPAS; sex shift for Heart Disease; analogous shift for German Credit).

\subsection{Protected-attribute operationalization and shift rationale}
\label{app:data-rationale}

\textbf{Protected attribute thresholds.}
We operationalize the protected attribute by a fixed, dataset-standard binarization to enable auditable group metrics while keeping the attribute out of model inputs (Table~\ref{tab:data-summary}). For German Credit we use age $<25$ vs.\ $\ge 25$ as an interpretable “younger vs.\ older” split; the SCC-VFL pipeline is unchanged under alternative fixed cutoffs because it only requires re-defining $s$ for evaluation and mask discovery.

\textbf{Non-IID shift realism.}
Our Non-IID setting is a targeted demographic shift: we alter the train/test distribution of the protected attribute (race for COMPAS, sex for Heart Disease, analogous for German) while keeping the feature space, labels, and client partitions unchanged. This models a common deployment mismatch where training data is collected under institution- or region-specific coverage, but evaluation occurs on a different demographic mix, stressing counterfactual stability under group shift.

\textbf{Partition and feature-family intuition.}
Client partitions (Table~\ref{tab:client-partitions}) are chosen to mimic realistic institutional boundaries and to keep the mediator/proxy discussion interpretable: German Credit concentrates socio-economic and employment-related variables in parties that plausibly mediate age effects; Heart partitions separate clinical measures from demographics and payer-coded variables; COMPAS partitions separate court records from community services and platform demographics. These choices make it transparent which feature families can enter the editable set $M$ and which can concentrate proxy signal $P$ under the same discovery score.

\section{Additional Benchmark}
\label{app:folktables}

We extend the experiments with a larger, multi-group sensitive-attribute benchmark using Folktables. We use the ACS 2018 1-Year Person survey (CA-only), subsample to 60k records, and follow the same evaluation protocol as the main paper: identical training pipeline and aggregation over 30 random seeds. We evaluate both IID and Non-IID settings.

\paragraph{Tasks and labels.}
We consider two Folktables prediction tasks: \textsc{Income} (binary label indicating whether income exceeds the task-defined threshold) and \textsc{Mobility} (binary mobility outcome as defined by Folktables).

\paragraph{Sensitive attribute.}
The sensitive attribute is race \texttt{RAC1P}, treated as multi-group and remapped to $s\in\{0,\dots,K-1\}$ (typically $K=9$ groups in this CA-only subset). As in the main paper, $s$ is used only for fairness/privacy evaluation and counterfactual construction, and is not provided as an input feature to the predictor.

\paragraph{Splits and shifts.}
IID uses stratified random splits by individuals. Non-IID uses a protected-attribute distribution shift between train and test induced by reweighting/sampling across race groups, mirroring the main paper’s shift-based protocol. For counterfactual evaluation with multi-group $s$, we sample an alternative group $s'\neq s$ per example and compute stability metrics under the corresponding intervention.

\paragraph{Metrics.}
We report utility (Accuracy, LogLoss), individual stability (SCG, FR), and group-fairness context metrics. Specifically, \emph{Demographic Parity (DP) difference} is the absolute difference (or, for multi-group, the max pairwise gap) in positive prediction rates across sensitive groups, and \emph{Equalized Odds (EO) gap} is the maximum absolute gap in TPR and FPR across groups.

\begin{table}[t]
\caption{Folktables ACS 2018 (CA-only, 60k) \textsc{Income} under IID splits, mean$\pm$std over 30 seeds. Best per column in bold.}
\label{tab:folktables_income_iid}
\centering
\small
\setlength{\tabcolsep}{4pt}
\renewcommand{\arraystretch}{1.15}
\begin{tabular}{lcccccc}
\toprule
Method & Acc.$\uparrow$ & LogLoss$\downarrow$ & SCG$\downarrow$ & FR$\downarrow$ & DP$\downarrow$ & EO$\downarrow$ \\
\midrule
Baseline     & $0.7954\pm0.0031$ & $0.4432\pm0.0035$ & $0.0261\pm0.0092$ & $0.7911\pm0.3142$ & $0.5822\pm0.1663$ & $0.6055\pm0.1888$ \\
Uniform CF   & $0.7951\pm0.0028$ & $0.4435\pm0.0031$ & $0.0214\pm0.0038$ & $0.5409\pm0.1263$ & $0.5786\pm0.1728$ & $0.5815\pm0.1773$ \\
Policy-blind & $0.7951\pm0.0032$ & $0.4441\pm0.0037$ & $\mathbf{0.0128\pm0.0016}$ & $0.3224\pm0.0538$ & $0.5826\pm0.1712$ & $0.5809\pm0.1756$ \\
Server-only  & $0.7953\pm0.0024$ & $0.4477\pm0.0031$ & $0.0213\pm0.0053$ & $0.7328\pm0.1637$ & $\mathbf{0.5759\pm0.1717}$ & $\mathbf{0.5679\pm0.1827}$ \\
SCC-VFL      & $\mathbf{0.7987\pm0.0033}$ & $\mathbf{0.4327\pm0.0044}$ & $0.0164\pm0.0118$ & $\mathbf{0.2046\pm0.1038}$ & $0.5771\pm0.1714$ & $0.5746\pm0.1883$ \\
\bottomrule
\end{tabular}
\end{table}

\begin{table}[t]
\caption{Folktables ACS 2018 (CA-only, 60k) \textsc{Mobility} under IID splits, mean$\pm$std over 30 seeds. Best per column in bold.}
\label{tab:folktables_mobility_iid}
\centering
\small
\setlength{\tabcolsep}{4pt}
\renewcommand{\arraystretch}{1.15}
\begin{tabular}{lcccccc}
\toprule
Method & Acc.$\uparrow$ & LogLoss$\downarrow$ & SCG$\downarrow$ & FR$\downarrow$ & DP$\downarrow$ & EO$\downarrow$ \\
\midrule
Baseline     & $0.7074\pm0.0072$ & $0.5738\pm0.0041$ & $0.0137\pm0.0040$ & $0.4631\pm0.1849$ & $0.1955\pm0.0704$ & $0.1336\pm0.0503$ \\
Uniform CF   & $0.7055\pm0.0092$ & $0.5751\pm0.0035$ & $0.0105\pm0.0026$ & $0.3539\pm0.1797$ & $0.1788\pm0.0827$ & $0.1235\pm0.0612$ \\
Policy-blind & $0.7034\pm0.0098$ & $0.5752\pm0.0045$ & $0.0084\pm0.0022$ & $0.2528\pm0.1472$ & $\mathbf{0.1505\pm0.0817}$ & $\mathbf{0.1020\pm0.0591}$ \\
Server-only  & $0.7022\pm0.0102$ & $0.5768\pm0.0040$ & $0.0114\pm0.0035$ & $0.3541\pm0.2614$ & $0.1624\pm0.1672$ & $0.1198\pm0.1702$ \\
SCC-VFL      & $\mathbf{0.7237\pm0.0044}$ & $\mathbf{0.5451\pm0.0030}$ & $\mathbf{0.0072\pm0.0199}$ & $\mathbf{0.0220\pm0.8754}$ & $0.1689\pm0.0433$ & $0.1104\pm0.0350$ \\
\bottomrule
\end{tabular}
\end{table}

\begin{table}[t]
\caption{Folktables ACS 2018 (CA-only, 60k) \textsc{Income} under Non-IID splits, mean$\pm$std over 30 seeds. Best per column in bold.}
\label{tab:folktables_income_noniid}
\centering
\small
\setlength{\tabcolsep}{4pt}
\renewcommand{\arraystretch}{1.15}
\begin{tabular}{lcccccc}
\toprule
Method & Acc.$\uparrow$ & LogLoss$\downarrow$ & SCG$\downarrow$ & FR$\downarrow$ & DP$\downarrow$ & EO$\downarrow$ \\
\midrule
Baseline     & $0.7337\pm0.0072$ & $0.5325\pm0.0085$ & $0.0325\pm0.0109$ & $1.1957\pm0.3764$ & $0.6041\pm0.1363$ & $0.5355\pm0.2910$ \\
Uniform CF   & $0.7327\pm0.0077$ & $0.5343\pm0.0093$ & $0.0206\pm0.0031$ & $0.6731\pm0.1176$ & $\mathbf{0.5968\pm0.1301}$ & $0.5338\pm0.2919$ \\
Policy-blind & $0.7353\pm0.0061$ & $0.5316\pm0.0069$ & $0.0150\pm0.0026$ & $0.4704\pm0.0990$ & $0.6036\pm0.1347$ & $0.5452\pm0.2804$ \\
Server-only  & $\mathbf{0.7378\pm0.0052}$ & $\mathbf{0.5267\pm0.0057}$ & $0.0248\pm0.0064$ & $1.0670\pm0.2953$ & $0.5994\pm0.1374$ & $0.5372\pm0.2842$ \\
SCC-VFL      & $0.7314\pm0.0112$ & $0.5273\pm0.0136$ & $\mathbf{0.0141\pm0.0066}$ & $\mathbf{0.3665\pm0.1614}$ & $0.6016\pm0.1294$ & $\mathbf{0.5020\pm0.2611}$ \\
\bottomrule
\end{tabular}
\end{table}

\begin{table}[t]
\caption{Folktables ACS 2018 (CA-only, 60k) \textsc{Mobility} under Non-IID splits, mean$\pm$std over 30 seeds. Best per column in bold.}
\label{tab:folktables_mobility_noniid}
\centering
\small
\setlength{\tabcolsep}{4pt}
\renewcommand{\arraystretch}{1.15}
\begin{tabular}{lcccccc}
\toprule
Method & Acc.$\uparrow$ & LogLoss$\downarrow$ & SCG$\downarrow$ & FR$\downarrow$ & DP$\downarrow$ & EO$\downarrow$ \\
\midrule
Baseline     & $0.6368\pm0.0083$ & $0.6613\pm0.0050$ & $0.0238\pm0.0064$ & $1.5080\pm0.4802$ & $0.4092\pm0.0285$ & $0.3132\pm0.0184$ \\
Uniform CF   & $0.6362\pm0.0087$ & $0.6602\pm0.0068$ & $0.0157\pm0.0037$ & $0.7915\pm0.1598$ & $0.4060\pm0.0214$ & $0.3147\pm0.0178$ \\
Policy-blind & $0.6345\pm0.0095$ & $0.6601\pm0.0073$ & $\mathbf{0.0121\pm0.0023}$ & $0.6357\pm0.1536$ & $0.4096\pm0.0227$ & $0.3175\pm0.0185$ \\
Server-only  & $\mathbf{0.6483\pm0.0079}$ & $\mathbf{0.6462\pm0.0063}$ & $0.0225\pm0.0087$ & $1.5256\pm0.6073$ & $\mathbf{0.3877\pm0.0323}$ & $0.2946\pm0.0395$ \\
SCC-VFL      & $0.6320\pm0.0184$ & $0.6542\pm0.0093$ & $0.0133\pm0.0280$ & $\mathbf{0.3567\pm2.1679}$ & $0.4024\pm0.0463$ & $\mathbf{0.2941\pm0.0350}$ \\
\bottomrule
\end{tabular}
\end{table}

Across both tasks, SCC-VFL attains the lowest flip rate in all four settings and remains competitive on utility. In IID, SCC-VFL achieves the best LogLoss in both tasks and the best Accuracy on \textsc{Mobility} (Table~\ref{tab:folktables_mobility_iid}), while substantially reducing decision instability (FR) relative to all baselines (Tables~\ref{tab:folktables_income_iid}--\ref{tab:folktables_mobility_iid}). Under Non-IID shifts, SCC-VFL preserves low FR and achieves the best EO in both tasks (Tables~\ref{tab:folktables_income_noniid}--\ref{tab:folktables_mobility_noniid}), indicating that the counterfactual stability objective remains well-controlled even under demographic distribution shift with multi-group $s$.

\section{Models and Hyperparameters}
\label{app:models-hparams}

All methods share the same backbone, optimiser schedule, and partitions (Table~\ref{tab:client-partitions}); differences are only in enabled components (Table~\ref{tab:hparams-method}). Shared hyperparameters are in Table~\ref{tab:hparams-common}.

\subsection{Common architecture and client--server protocol}

Client $c\in\{1,2,3\}$ holds $x_i^{(c)}\in\mathbb{R}^{d_c}$ and sends an encoding to the server:
\begin{equation}
h^{(c)}_i \;=\; h_{\phi}^{(c)}\!\left(x^{(c)}_i\right),
\qquad
z_i \;=\; \mathrm{Fuse}\!\left(h^{(1)}_i,\, h^{(2)}_i,\, h^{(3)}_i\right).
\label{eq:client_fuse}
\end{equation}
We implement this as a 2-layer MLP over concatenated $x_i=[x_i^{(1)},x_i^{(2)},x_i^{(3)}]$ with fixed index ranges:
\begin{equation}
h_i \;=\; \sigma\!\left(W_2\,\sigma\!\left(W_1 x_i + b_1\right) + b_2\right),
\label{eq:mlp_backbone}
\end{equation}
with ReLU $\sigma$. The server uses a linear classifier
\begin{equation}
f_\theta(z_i) \;=\; W_{\mathrm{cls}} h_i + b_{\mathrm{cls}},
\label{eq:server_head}
\end{equation}
for binary $y_i\in\{0,1\}$.

\subsection{Mediator and proxy discovery}
\label{app:mediator-proxy}

Server forms a DP sketch of the protected attribute:
\begin{equation}
z_s \;=\; \psi(s).
\label{eq:zs_sketch}
\end{equation}
For each feature coordinate $j$, compute (i) signed probability gap $\Delta P_j$ and (ii) $\mathrm{HSIC}_j$, then rank by
\begin{equation}
S_j \;=\; f\!\big(|\Delta P_j|,\mathrm{HSIC}_j\big),
\label{eq:scoreSj}
\end{equation}
and derive per-party sets $(M^{(p)},P^{(p)},N^{(p)})$ via fractions $(\rho_M,\rho_P)$:
\begin{equation}
P^{(p)} \;=\; \Bigl\{ j \in M^{(p)} : S_j \text{ lies in the top } \rho_P \text{ of } \{S_k : k \in M^{(p)}\} \Bigr\}.
\label{eq:proxy_set}
\end{equation}
Fractions are selected on validation, then fixed across seeds/methods for that dataset.

\noindent\textbf{Governance note.} The partition into permissible mediators \(M\) and impermissible proxies \(P\subset M\) is a normative policy choice; we make this choice auditable via a concrete protocol in Appendix~\ref{app:policy-audit}.

\subsection{Differential privacy and secure aggregation}
\label{app:dp-secagg}

\paragraph{Scope and threat model.}
We use differential privacy \emph{only} to privatize the released sketch used for mediator/proxy screening in mask discovery. This provides record-level protection for the contingency tables (and any downstream statistics computed from them) against an observer who sees the released sketch. It does \emph{not} provide end-to-end DP for learned representations, model parameters, or gradients, and it does not rule out inference attacks outside the evaluated threat model.

\paragraph{DP sketch for mediator discovery.}
A trusted holder of the sensitive attribute (or a trusted enclave) computes and releases the DP sketch; the coordinating server and other parties never observe raw $s$. For feature $j$, discretise into $K$ bins and build contingency table $c_j\in\mathbb{R}^{2\times K}$, clip:
\begin{equation}
\bar{c}_j \;=\; c_j \cdot \min\!\left(1, \frac{S}{\|c_j\|_2}\right),
\label{eq:clip_counts}
\end{equation}
then add Gaussian noise:
\begin{equation}
\tilde{c}_j \;=\; \bar{c}_j + \mathcal{N}\!\left(0, \sigma_{\mathrm{sketch}}^2 S^2 I\right),
\label{eq:gauss_mech}
\end{equation}
and compute $\Delta P_j,\mathrm{HSIC}_j$ from $\tilde{c}_j$.

\begin{proof}[Proof of Proposition~\ref{prop:dp_sketch}]
Under addition or removal of one record, the contingency table $c_j\in\mathbb{R}^{2\times K}$ changes in exactly one cell by $\pm 1$, so the $\ell_2$-sensitivity of the unclipped table is $1$. After clipping (Eq.~\eqref{eq:clip_counts}), the $\ell_2$-norm of $\bar{c}_j$ is bounded by $S$, and the $\ell_2$-sensitivity of the clipped table under neighboring datasets is at most $S$ (since each record contributes at most $S$ to the clipped output). The Gaussian mechanism (Eq.~\eqref{eq:gauss_mech}) adds noise with standard deviation $\sigma_{\mathrm{sketch}} \cdot S$ per coordinate. By the standard Gaussian mechanism guarantee (Theorem A.1 of Dwork and Roth~\cite{dwork2014algorithmic}), releasing $\tilde{c}_j$ satisfies $(\varepsilon,\delta)$-DP with $\varepsilon = (1/\sigma_{\mathrm{sketch}})\sqrt{2\ln(1.25/\delta)}$ for any $\delta \in (0,1)$.
\end{proof}

\noindent When masks are refreshed, the same mechanism is re-run; we cap the number of refreshes, and each refresh consumes an additional $(\varepsilon,\delta)$ budget. A full composition accountant (e.g., via R\'{e}nyi DP or the moments accountant) is outside the scope of this work but is a natural extension.

\paragraph{Secure aggregation.}
Secure aggregation protects per-client updates from being inspected individually by the coordinating server. Client gradient $g_t^{(c)}$ is masked as
\begin{equation}
u_t^{(c)} \;=\; g_t^{(c)} + r_t^{(c)},
\label{eq:secagg_send}
\end{equation}
with masks satisfying $\sum_{c=1}^m r_t^{(c)}=0$, so the server recovers only
\begin{equation}
\sum_{c=1}^m u_t^{(c)} \;=\; \sum_{c=1}^m g_t^{(c)}.
\label{eq:secagg_sum}
\end{equation}
We add no DP noise to training gradients or model updates in the main experiments; DP is applied only to the mask-discovery sketch above.

\subsection{SCC--VFL model}
\label{app:sccvfl-model}

\paragraph{Masked conditional generators.}
Client generator edits only mediators:
\begin{equation}
x^{\mathrm{cf},(c)}_{M} \;=\; g_{\varphi}^{(c)}\!\big(x^{(c)}_{N}, x^{(c)}_{M}, c_i, s'\big),
\qquad
x^{\mathrm{cf},(c)}_{N} \;=\; x^{(c)}_{N}.
\label{eq:gen_edit}
\end{equation}
We instantiate $g_\varphi^{(c)}$ as a cVAE:
\begin{equation}
z^{(c)} \sim q_{\varphi}^{(c)}\!\big(z^{(c)} \mid x^{(c)}_{N}, x^{(c)}_{M}, s\big),
\qquad
x^{\mathrm{cf},(c)}_{M} = d_{\varphi}^{(c)}\!\big(z^{(c)}, x^{(c)}_{N}, c_i, s'\big).
\label{eq:cvae}
\end{equation}

\paragraph{Proxy adversaries and gradient reversal.}
Per-party adversary with GRL:
\begin{equation}
\tilde{u}_i^{(p)} = \mathrm{GRL}_{\lambda_{\mathrm{grl}}}\big(u_i^{(p)}\big),
\qquad
\hat{s}_i^{(p)} = a_{\omega}^{(p)}\!\big(\tilde{u}_i^{(p)}\big),
\label{eq:grl_adv}
\end{equation}
and total adversarial loss
\begin{equation}
L_{\mathrm{adv}} \;=\; \sum_{p=1}^{m} L_{\mathrm{adv}}^{(p)}.
\label{eq:adv_sum}
\end{equation}

\paragraph{Objective.}
\begin{equation}
L \;=\;
L_{\mathrm{cls}}
+ \lambda_{\mathrm{cons}} L_{\mathrm{cons}}
+ \lambda_{\mathrm{gen}} L_{\mathrm{gen}}
+ \lambda_{\mathrm{adv}} L_{\mathrm{adv}}.
\label{eq:full_obj}
\end{equation}

\subsection{Baseline implementations}
\label{app:baseline-impl}

Baselines are ablations: same backbone Eqs.~\eqref{eq:mlp_backbone}--\eqref{eq:server_head}, same partitions (Table~\ref{tab:client-partitions}), same hyperparameters (Table~\ref{tab:hparams-common}), and activation pattern in Table~\ref{tab:hparams-method}.

\textbf{Adv-NoMask.} Task loss plus adversarial debiasing (no mask discovery; adversary active).

\textbf{Uniform-CF.} Generators edit all coordinates; uses $L_{\mathrm{cls}}+\lambda_{\mathrm{cons}}L_{\mathrm{cons}}$; no adversary.

\textbf{Policy-blind mask.} Uses mediator mask $M$ but no proxy handling; generators and consistency active; adversary off.

\textbf{Server only consistency.} No generators; server perturbs/shuffles mediator coordinates and applies consistency; adversary off.

\subsection{Hyperparameter settings}
\label{app:hparams}

Shared settings are in Table~\ref{tab:hparams-common}. Active components per method are in Table~\ref{tab:hparams-method}.

\begin{table}[t]
  \caption{Shared architectural and optimisation hyperparameters per dataset. These values are held fixed across SCC--VFL and all baselines.}
  \label{tab:hparams-common}
  \centering
  \small
  \setlength{\tabcolsep}{4pt}
  \renewcommand{\arraystretch}{1.15}
  \begin{tabular}{lccc}
    \toprule
    Setting & German Credit & Heart Disease & COMPAS Cox \\
    \midrule
    \# feature clients $m$ & 3 & 3 & 3 \\
    Encoder hidden dim $d_z$ & 64 & 32 & 32 \\
    Encoder layers & 2 & 2 & 2 \\
    Dropout (encoder) & 0.05 & 0.10 & 0.10 \\
    Classifier output units & 2 & 2 & 2 \\
    Generator type & cVAE & cVAE & cVAE \\
    Generator latent dim & 8 & 16 & 8 \\
    Optimiser (backbone) & AdamW & Adam & Adam \\
    Learning rate (backbone) & 0.015 & 0.010 & 0.005 \\
    Learning rate (generators) & 0.010 & 0.010 & 0.005 \\
    Learning rate (adversary) & 0.005 & 0.005 & 0.003 \\
    Batch size & full batch & full batch & full batch \\
    Max epochs (baselines) & 300 & 200 & 80 \\
    Max epochs (SCC--VFL) & 300 & 200 & 150 \\
    Warmup epochs (no $L_{\mathrm{cons}}$) & 40 & 30 & 25 \\
    Early stopping patience & 35 & 25 & 25 \\
    Number of seeds & 30 & 30 & 30 \\
    \bottomrule
  \end{tabular}
\end{table}

\begin{table}[t]
  \caption{Loss terms and generator masks per method. All methods share the same backbone and optimiser hyperparameters; only the active components differ.}
  \label{tab:hparams-method}
  \centering
  \small
  \setlength{\tabcolsep}{4pt}
  \renewcommand{\arraystretch}{1.15}
  \begin{tabular}{lcccccc}
    \toprule
    Method & Gen. mask & $L_{\mathrm{cls}}$ & $L_{\mathrm{cons}}$ & $L_{\mathrm{gen}}$ & $L_{\mathrm{adv}}$ & Adv. on $h$ \\
    \midrule
    Adv-NoMask   & none               & \checkmark & $\times$     & $\times$     & \checkmark & \checkmark \\
    Uniform CF & all coords & \checkmark & \checkmark & \checkmark & $\times$ & $\times$ \\
    Policy blind mask & $M$ & \checkmark & \checkmark & \checkmark & $\times$ & $\times$ \\
    Server only consistency & random $M$ shuffle & \checkmark & \checkmark & $\times$ & $\times$ & $\times$ \\
    SCC--VFL (ours) & $M$ per client & \checkmark & \checkmark & \checkmark & \checkmark & \checkmark \\
    \bottomrule
  \end{tabular}
\end{table}

\section{Policy Specification and Audit Protocol for \(N/M/P\)}
\label{app:policy-audit}

SCC-VFL distinguishes \emph{editable mediators} \(M\) from \emph{impermissible proxies} \(P \subset M\). This split is a normative policy decision: \(M\) encodes which descendants of \(s\) are considered acceptable pathways for counterfactual edits, while \(P\) marks descendants judged unacceptable to rely on (even if predictive). We therefore make the \(N/M/P\) specification explicit, versioned, and contestable.


\paragraph{Domain examples (illustrative).}
\textbf{German Credit (protected: age).}
A lending policy board may treat financially grounded, action-relevant factors as permissible mediators (e.g., repayment history signals) while flagging demographic-adjacent or structurally discriminatory signals as proxies if they encode age-related opportunity differences. The Policy Card records which feature groups are placed in \(M\) vs. \(P\), and why, and documents the appeal path for disputed assignments.

\textbf{Heart Disease (protected: sex).}
A clinical governance group may allow physiological variables that plausibly sit on causal pathways for disease risk as mediators, while treating administrative or measurement artifacts that encode sex in a non-clinical way as proxies. Reviews are triggered when clinical guidelines change or when the patient population shifts.

\textbf{COMPAS Cox (protected: race).}
A justice-policy panel may classify variables tied to enforcement intensity or structural inequities as proxies (even if predictive), while allowing narrowly defined, policy-justified pathways as mediators if explicitly permitted by the deployment context. The appeal process supports challenges to specific role assignments and logs any updates under a new policy version.

\begin{tcolorbox}[title={Policy Specification and Audit Protocol (auditable \(N/M/P\))},fonttitle=\bfseries]
\small
\textbf{A1. Policy owner (who decides).}
A named governance group defines and approves the mediator/proxy policy: regulator or compliance lead (if applicable, consistent with non-discrimination law~\cite{weerts2023algorithmic}), an institutional policy board, domain experts, and at least one affected-stakeholder representative.

\textbf{A2. Inputs (what is being decided).}
A dataset-specific \emph{feature dictionary} (definitions, measurement process, manipulability, known confounders, and known correlations with \(s\)) and the SCC-VFL discovery outputs (DP-ranked coordinates).

\textbf{A3. Evidence (what counts).}
Accepted evidence includes written statutes or institutional policy, domain guidelines, documented stakeholder input, and empirical checks (feature meaning, stability under shift, and whether a feature is a plausible descendant of \(s\) in the application context).

\textbf{A4. Decision rule (how \(N/M/P\) is produced).}
Start from the DP-ranked candidates as \emph{candidate descendants}. The policy owner assigns:
(i) \(N\): coordinates treated as non-descendants (held fixed),
(ii) \(M\): permissible descendants (editable under CF),
(iii) \(P \subset M\): impermissible descendants (treated as leakage-risk and not relied on).
Any override to the ranking must include a written rationale.

\textbf{A5. Artifacts (what is logged).}
Create a Policy Card per dataset, modeled after Model Cards~\cite{mitchell2019model} and Datasheets for Datasets~\cite{gebru2021datasheets}, with: version ID, date, owners, rationale per role decision, evidence links, and a machine-readable $N/M/P$ mask file used in training and evaluation.

\textbf{A6. Contestability (appeals).}
Provide a documented process for an individual or auditor to dispute a role assignment. The appeal triggers: (i) review of the feature dictionary entry, (ii) re-evaluation of evidence, and (iii) a logged decision with an updated version if changed.

\textbf{A7. Review cadence (when it is revisited).}
Re-approve the policy on a fixed schedule (e.g., quarterly) and additionally upon major distribution shift, feature set changes, or policy changes. Each review re-runs discovery on the new data slice and records deltas in \(N/M/P\).
\end{tcolorbox}

\section{Privacy and Robustness Evaluation}
\label{app:privacy_robustness}

This appendix evaluates (i) \emph{representation leakage} via an attribute inference attacker trained on latent vectors, and (ii) \emph{input robustness} via a constrained adversary that perturbs only a designated feature subspace.
Threat model and notation are in Section~\ref{app:privacy_robustness}; AIA in Section~\ref{app:aia_theory}; subspace-PGD in Section~\ref{app:pgd_theory}; results are reported in Tables~\ref{tab:aia_sr} and~\ref{tab:pgd_sr}.

\subsection{Threat Model and Notation}

We consider a trained model with encoder $h_\phi$ and prediction head $f_\theta$ under the VFL threat surface that spans feature inference~\cite{luo2021featureinference, jin2021cafe} and label inference~\cite{fu2022label} attacks. For entity $i$ with input $x_i$, the encoder produces a latent representation $h_i\in\mathbb{R}^{d_h}$ and the head outputs a predictive distribution over labels $\mathcal{Y}$:
\begin{equation}
h_i = h_\phi(x_i) \in \mathbb{R}^{d_h},
\quad
\hat{y}_i = f_\theta(h_i) \in \Delta(\mathcal{Y}).
\label{eq:threat_notation}
\end{equation}
The protected attribute is $s_i\in\{0,1\}$. We use $\hat{y}_i$ (overloading notation) for the predicted class obtained by the argmax over class probabilities:
\begin{equation}
\hat{y}_i = \arg\max_{y} f_\theta(h_i)_y.
\label{eq:pred_class}
\end{equation}
The adversary is post-hoc (after training) and is evaluated by success rate (SR, in \%) where higher SR indicates stronger attack and thus weaker privacy/robustness.

\subsection{Attribute Inference Attack (AIA)}
\label{app:aia_theory}

AIA measures how much information about $s$ remains in the learned representation $h_i$. The attacker trains a binary classifier
\begin{equation}
g_\psi : \mathbb{R}^{d_h} \to [0,1],
\label{eq:aia_model}
\end{equation}
where $g_\psi(h_i)$ estimates $\Pr(s_i=1\mid h_i)$. Given attacker training data $\{(h_i,s_i)\}_{i=1}^n$, parameters $\psi$ are fit by minimizing binary cross-entropy:
\begin{equation}
\mathcal{L}_{\text{AIA}}(\psi)
= \frac{1}{n} \sum_{i=1}^{n}
\ell_{\text{BCE}}\big(g_\psi(h_i), s_i\big),
\quad
\ell_{\text{BCE}}(p,s)
= - s \log p - (1-s) \log (1-p).
\label{eq:aia_loss}
\end{equation}
The attacker predicts $\hat{s}_i$ by thresholding at $1/2$:
\begin{equation}
\hat{s}_i = \mathbb{I}\{ g_\psi(h_i) \ge 1/2\}.
\label{eq:aia_rule}
\end{equation}
For an attacker training budget $T$ (epochs), we report the attack success rate on a balanced test split $\mathcal{I}_{\text{test}}$:
\begin{equation}
\text{AIA-SR}(T)
= \frac{1}{|\mathcal{I}_{\text{test}}|}
\sum_{i \in \mathcal{I}_{\text{test}}}
\mathbb{I}\{\hat{s}_i^{(T)} = s_i\} \times 100\,\%.
\label{eq:aia_sr}
\end{equation}
As a reference, $\text{AIA-SR}^\star$ denotes the Bayes-optimal success rate achievable by the optimal classifier $g^\star$ under the induced representation distribution:
\begin{equation}
\text{AIA-SR}^\star
= \mathbb{E}\big[\mathbb{I}\{g^\star(h) = s\}\big].
\label{eq:aia_bayes}
\end{equation}
Lower AIA-SR indicates less $s$-signal in $h$.

\subsection{Subspace-Constrained PGD Attack}
\label{app:pgd_theory}

Subspace-PGD measures robustness to targeted feature manipulation. The attacker chooses a perturbation $\delta_i$ that is (i) supported only on a designated coordinate subset $S$ (the ``attackable'' subspace) and (ii) bounded in $\ell_\infty$ norm by $\epsilon$:
\begin{equation}
\text{supp}(\delta_i) \subseteq S,
\qquad
\|\delta_i\|_\infty \le \epsilon.
\label{eq:pgd_constraint}
\end{equation}
Starting from $\delta_i^{(0)}=0$, PGD iteratively updates $\delta_i$ using signed gradients of the task loss (with respect to the input) and then projects back to the feasible set:
\begin{equation}
\delta_i^{(t+1)}
= \Pi_{\mathcal{C}_\epsilon}\!\left(
\delta_i^{(t)}
+ \alpha \,\text{sign}\!\big(\nabla_{x} \ell_{\text{task}}
(f_\theta(h_\phi(x_i + \delta_i^{(t)})), \hat{y}_i)\big)
\right),
\label{eq:pgd_update}
\end{equation}
where $\alpha$ is the step size and $\Pi_{\mathcal{C}_\epsilon}$ projects onto
\begin{equation}
\mathcal{C}_\epsilon
= \big\{\delta \in \mathbb{R}^{d_x} :
\text{supp}(\delta) \subseteq S,\,
\|\delta\|_\infty \le \epsilon
\big\}.
\label{eq:pgd_set}
\end{equation}
After $T_{\text{PGD}}$ steps, the adversarial example is
\begin{equation}
x_i^{\text{adv}} = x_i + \delta_i^{(T_{\text{PGD}})}.
\label{eq:pgd_advex}
\end{equation}
The PGD success rate reports the fraction of inputs whose predicted class changes under attack:
\begin{equation}
\text{PGD-SR}(\epsilon)
= \frac{1}{n} \sum_{i=1}^{n}
\mathbb{I}\{\hat{y}_i^{\text{adv}} \ne \hat{y}_i\} \times 100\,\%.
\label{eq:pgd_sr}
\end{equation}
Lower PGD-SR indicates stronger robustness to targeted subspace perturbations.

\subsection{Implementation Details}
\label{app:attack_impl}

Table~\ref{tab:attack_impl} summarises the attack implementations and evaluation settings used for AIA and Subspace-PGD.

\begin{table}[t]
  \centering
  \caption{Attack implementation details and evaluation protocol.}
  \label{tab:attack_impl}
  \small
  \setlength{\tabcolsep}{5pt}
  \renewcommand{\arraystretch}{1.15}
  \begin{tabular}{l p{10.2cm}}
    \toprule
    Component & Setting \\
    \midrule
    AIA data & Balanced split over $s$; compute and freeze representations $h_\phi(x)$ \\
    AIA attacker & Train $g_\psi$ for $T \in \{10,20,40,80\}$ epochs \\
    AIA metric & Report AIA-SR via Eq.~\eqref{eq:aia_sr}, aggregated over $30$ seeds \\
    \midrule
    PGD subspace & Set sensitive subspace $S=M$ \\
    PGD steps & $T_{\text{PGD}}=20$ with step size $\alpha=\epsilon/5$ \\
    PGD radii & $\epsilon \in \{0.02,0.05,0.10,0.20\}$ \\
    PGD metric & Report PGD-SR via Eq.~\eqref{eq:pgd_sr}, aggregated over $30$ seeds \\
    \bottomrule
  \end{tabular}
\end{table}

\subsection{Empirical Results}
\label{app:attack_results}

AIA-SR is reported in Table~\ref{tab:aia_sr}. PGD-SR is reported in Table~\ref{tab:pgd_sr}.

\begin{table}[t]
    \centering
    \caption{Attribute inference attack success rate (AIA-SR, \%) as a function of attacker training budget (epochs), averaged over 30 seeds. Lower is better.}
    \label{tab:aia_sr}
    \begin{tabular}{lcccc}
        \hline
        \textbf{Method}          & \textbf{10 epochs}     & \textbf{20 epochs}     & \textbf{40 epochs}     & \textbf{80 epochs}     \\
        \hline
        Adv-NoMask               & $54.03 \pm 4.57$       & $56.15 \pm 5.35$       & $59.06 \pm 6.00$       & $62.52 \pm 5.55$       \\
        Uniform-CF               & $52.64 \pm 4.31$       & $55.41 \pm 4.61$       & $57.93 \pm 4.63$       & $62.51 \pm 4.26$       \\
        Policy-blind Mask        & $53.11 \pm 5.03$       & $55.47 \pm 5.55$       & $58.35 \pm 6.50$       & $62.04 \pm 5.71$       \\
        Server-Consistency       & $52.84 \pm 4.03$       & $53.49 \pm 3.98$       & $59.79 \pm 5.42$       & $62.27 \pm 4.91$       \\
        SCC-VFL (ours)           & $53.08 \pm 4.38$       & $53.45 \pm 4.67$       & $56.17 \pm 6.39$       & $57.89 \pm 5.72$       \\
        \hline
    \end{tabular}
\end{table}

\begin{table}[t]
    \centering
    \caption{Subspace-constrained PGD attack success rate (PGD-SR, \%) as a function of perturbation radius $\epsilon$, averaged over 30 seeds. Lower is better.}
    \label{tab:pgd_sr}
    \begin{tabular}{lcccc}
        \hline
        \textbf{Method}          & \boldmath$\epsilon=0.02$ & \boldmath$\epsilon=0.05$ & \boldmath$\epsilon=0.10$ & \boldmath$\epsilon=0.20$ \\
        \hline
        Adv-NoMask               & $2.24 \pm 0.90$         & $5.97 \pm 1.43$         & $11.41 \pm 1.80$        & $21.67 \pm 2.85$        \\
        Uniform-CF               & $1.94 \pm 0.81$         & $5.14 \pm 1.08$         & $10.36 \pm 1.64$        & $20.64 \pm 3.05$        \\
        Policy-blind Mask        & $2.09 \pm 0.67$         & $5.36 \pm 1.41$         & $10.37 \pm 1.96$        & $20.11 \pm 2.75$        \\
        Server-Consistency       & $1.28 \pm 0.57$         & $3.50 \pm 1.15$         & $7.54 \pm 1.85$         & $15.02 \pm 2.89$        \\
        SCC-VFL (ours)           & $1.03 \pm 0.61$         & $2.44 \pm 0.93$         & $4.53 \pm 1.40$         & $9.30 \pm 2.29$         \\
        \hline
    \end{tabular}
\end{table}


\section{Runtime and Communication Overhead}
\label{app:runtime-comm}

We report wall-clock runtime and communication for SCC-VFL on German Credit (IID), aggregated as mean$\pm$std over 30 random seeds (CPU runs). Over training, SCC-VFL requires $1.0251\pm0.2233$ seconds total, corresponding to $0.0218\pm0.0032$ seconds per epoch and $7.1015\pm1.0378$ ms per optimization step, with $40.1333\pm4.7380$ epochs executed. Communication totals (sum over training) are $2.87$ MB for model-side messages and $124.69$--$146.51$ MB for feature-side messages across the shown seeds (batch size $256$); the per-seed table indicates a stable model communication footprint and feature communication that scales with the number of steps/epochs. These measurements quantify the additional passes and per-party generator components in SCC-VFL in a cross-silo style setting: while model communication remains small, feature-side traffic dominates and should be considered when bandwidth is constrained.

\section{Sensitivity Analysis}
\label{app:sensitivity}

German Credit sensitivity is studied under the same protocol as the main results (fixed split sizes, identical pipeline, 30 seeds). Each sweep changes one hyperparameter; report mean$\pm$std of Accuracy, LogLoss, SCG, and FR(\%). Tables~\ref{tab:sens_topfrac}--\ref{tab:sens_lamgen} contain all results.

\subsection{Mediator Mask Threshold Sensitivity ($\tau_M$)}
\label{app:sens_topfrac}

Mediator set (top-$\tau_M$ by DP ranking score $S_j$):
\begin{equation}
M \;=\; \{ j \in [d] \;:\; S_j \text{ is in the top } \tau_M \text{ fraction of } \{S_\ell\}_{\ell=1}^d \},
\end{equation}
with $S_j$ defined from $|\Delta P_j|$ and $\mathrm{HSIC}_j$ (Section~\ref{subsec:mask}). Table~\ref{tab:sens_topfrac} shows moderate utility variation and consistently low SCG/FR; $\tau_M=0.33$ minimizes SCG and yields FR$=0$ (within resolution).

\begin{table}[t]
\caption{Sensitivity to mediator mask threshold $\tau_M$ (top\_frac). Other hyperparameters fixed.}
\label{tab:sens_topfrac}
\centering
\small
\begin{tabular}{lcccc}
\toprule
Setting & Acc.$\uparrow$ & LogLoss$\downarrow$ & SCG$\downarrow$ & FR(\%)$\downarrow$ \\
\midrule
$\tau_M=0.15$ & $0.7353\pm0.0138$ & $0.5738\pm0.0117$ & $0.0051\pm0.0041$ & $0.13\pm0.16$ \\
$\tau_M=0.25$ & $0.7020\pm0.0045$ & $0.5792\pm0.0195$ & $0.0068\pm0.0074$ & $0.13\pm0.16$ \\
$\tau_M=0.33$ & $0.7187\pm0.0233$ & $0.5780\pm0.0284$ & $\mathbf{0.0027\pm0.0011}$ & $\mathbf{0.00\pm0.00}$ \\
$\tau_M=0.45$ & $0.7180\pm0.0088$ & $0.5742\pm0.0148$ & $0.0083\pm0.0067$ & $0.33\pm0.30$ \\
$\tau_M=0.60$ & $0.7140\pm0.0083$ & $\mathbf{0.5714\pm0.0130}$ & $0.0053\pm0.0056$ & $0.27\pm0.39$ \\
\bottomrule
\end{tabular}
\
\end{table}

\subsection{Counterfactual Edit Magnitude Sensitivity ($\gamma_{\mathrm{cf}}$)}
\label{app:sens_cfscale}

Edit scale for mediator-only perturbations:
\begin{equation}
\tilde{x} \;=\; x \;+\; \gamma_{\mathrm{cf}} \cdot \Pi_{M}\big(\Delta(x)\big),
\end{equation}
where $\Pi_M$ projects to mediator coordinates and $\Delta(x)$ is generator output. Table~\ref{tab:sens_cfscale} shows utility invariant across the tested range; SCG increases smoothly with $\gamma_{\mathrm{cf}}$; FR remains near zero until the largest scales.

\begin{table}[t]
\caption{Sensitivity to counterfactual edit magnitude $\gamma_{\mathrm{cf}}$ (cf\_scale). Other hyperparameters fixed.}
\label{tab:sens_cfscale}
\centering
\small
\begin{tabular}{lcccc}
\toprule
Setting & Acc.$\uparrow$ & LogLoss$\downarrow$ & SCG$\downarrow$ & FR(\%)$\downarrow$ \\
\midrule
$\gamma_{\mathrm{cf}}=0.05$ & $0.7187\pm0.0233$ & $0.5780\pm0.0284$ & $\mathbf{0.0015\pm0.0008}$ & $\mathbf{0.00\pm0.00}$ \\
$\gamma_{\mathrm{cf}}=0.10$ & $0.7187\pm0.0233$ & $0.5780\pm0.0284$ & $0.0019\pm0.0010$ & $\mathbf{0.00\pm0.00}$ \\
$\gamma_{\mathrm{cf}}=0.20$ & $0.7187\pm0.0233$ & $0.5780\pm0.0284$ & $0.0027\pm0.0011$ & $\mathbf{0.00\pm0.00}$ \\
$\gamma_{\mathrm{cf}}=0.35$ & $0.7187\pm0.0233$ & $0.5780\pm0.0284$ & $0.0039\pm0.0011$ & $0.07\pm0.13$ \\
$\gamma_{\mathrm{cf}}=0.50$ & $0.7187\pm0.0233$ & $0.5780\pm0.0284$ & $0.0050\pm0.0011$ & $0.07\pm0.13$ \\
\bottomrule
\end{tabular}
\end{table}

\subsection{Consistency Weight Sensitivity ($\lambda_{\mathrm{cons}}$)}
\label{app:sens_lamcons}

Objective with scheduled max weight $\lambda_{\mathrm{cons}}$:
\begin{equation}
\mathcal{L} \;=\; \mathcal{L}_{\mathrm{task}} \;+\; \lambda_{\mathrm{cons}} \cdot \mathcal{L}_{\mathrm{cons}} \;+\; \lambda_{\mathrm{adv}} \cdot \mathcal{L}_{\mathrm{adv}} \;+\; \lambda_{\mathrm{gen}} \cdot \mathcal{L}_{\mathrm{gen}},
\end{equation}
where $\mathcal{L}_{\mathrm{cons}}$ enforces agreement under masked edits. Table~\ref{tab:sens_lamcons} shows utility stable; SCG rises mildly with $\lambda_{\mathrm{cons}}$; FR stays at 0.

\begin{table}[t]
\caption{Sensitivity to consistency weight $\lambda_{\mathrm{cons}}$ (lam\_cons\_max). Other hyperparameters fixed.}
\label{tab:sens_lamcons}
\small
\begin{tabular}{lcccc}
\toprule
Setting & Acc.$\uparrow$ & LogLoss$\downarrow$ & SCG$\downarrow$ & FR(\%)$\downarrow$ \\
\midrule
$\lambda_{\mathrm{cons}}=0.0$ & $0.7187\pm0.0233$ & $0.5781\pm0.0285$ & $\mathbf{0.0017\pm0.0002}$ & $\mathbf{0.00\pm0.00}$ \\
$\lambda_{\mathrm{cons}}=0.4$ & $0.7187\pm0.0233$ & $0.5780\pm0.0284$ & $0.0020\pm0.0003$ & $\mathbf{0.00\pm0.00}$ \\
$\lambda_{\mathrm{cons}}=0.8$ & $0.7187\pm0.0233$ & $0.5780\pm0.0284$ & $0.0024\pm0.0007$ & $\mathbf{0.00\pm0.00}$ \\
$\lambda_{\mathrm{cons}}=1.2$ & $0.7187\pm0.0233$ & $0.5780\pm0.0284$ & $0.0027\pm0.0011$ & $\mathbf{0.00\pm0.00}$ \\
$\lambda_{\mathrm{cons}}=1.6$ & $0.7187\pm0.0233$ & $0.5780\pm0.0284$ & $0.0031\pm0.0014$ & $\mathbf{0.00\pm0.00}$ \\
\bottomrule
\end{tabular}
\end{table}

\subsection{Adversary Weight Sensitivity ($\lambda_{\mathrm{adv}}$)}
\label{app:sens_lamadv}

We vary $\lambda_{\mathrm{adv}}$ holding others fixed. Table~\ref{tab:sens_lamadv} shows low-to-moderate $\lambda_{\mathrm{adv}}$ maintains FR$=0$ and good utility; larger $\lambda_{\mathrm{adv}}$ slightly increases FR and can reduce utility.

\begin{table}[t]
\caption{Sensitivity to adversary weight $\lambda_{\mathrm{adv}}$. Other hyperparameters fixed.}
\label{tab:sens_lamadv}
\centering
\small
\begin{tabular}{lcccc}
\toprule
Setting & Acc.$\uparrow$ & LogLoss$\downarrow$ & SCG$\downarrow$ & FR(\%)$\downarrow$ \\
\midrule
$\lambda_{\mathrm{adv}}=0.00$ & $0.7120\pm0.0150$ & $0.5777\pm0.0185$ & $\mathbf{0.0026\pm0.0012}$ & $\mathbf{0.00\pm0.00}$ \\
$\lambda_{\mathrm{adv}}=0.01$ & $\mathbf{0.7220\pm0.0173}$ & $\mathbf{0.5771\pm0.0176}$ & $0.0027\pm0.0013$ & $\mathbf{0.00\pm0.00}$ \\
$\lambda_{\mathrm{adv}}=0.03$ & $0.7187\pm0.0233$ & $0.5780\pm0.0284$ & $0.0027\pm0.0011$ & $\mathbf{0.00\pm0.00}$ \\
$\lambda_{\mathrm{adv}}=0.06$ & $0.7120\pm0.0133$ & $0.5741\pm0.0225$ & $0.0028\pm0.0013$ & $0.07\pm0.13$ \\
$\lambda_{\mathrm{adv}}=0.10$ & $0.7080\pm0.0096$ & $0.5869\pm0.0115$ & $0.0031\pm0.0020$ & $0.13\pm0.16$ \\
\bottomrule
\end{tabular}
\end{table}

\subsection{Generator Regularization Sensitivity ($\lambda_{\mathrm{gen}}$)}
\label{app:sens_lamgen}

We vary $\lambda_{\mathrm{gen}}$ holding others fixed. Table~\ref{tab:sens_lamgen} shows $\lambda_{\mathrm{gen}}=0$ yields large SCG and non-zero FR; small regularization restores FR$=0$ and reduces SCG without hurting utility.

\begin{table}[t]
\caption{Sensitivity to generator regularization $\lambda_{\mathrm{gen}}$ (lam\_gen). Other hyperparameters fixed.}
\label{tab:sens_lamgen}
\centering
\small
\begin{tabular}{lcccc}
\toprule
Setting & Acc.$\uparrow$ & LogLoss$\downarrow$ & SCG$\downarrow$ & FR(\%)$\downarrow$ \\
\midrule
$\lambda_{\mathrm{gen}}=0.000$ & $0.7207\pm0.0241$ & $0.5786\pm0.0281$ & $0.0294\pm0.0084$ & $0.93\pm0.44$ \\
$\lambda_{\mathrm{gen}}=0.005$ & $0.7187\pm0.0233$ & $0.5780\pm0.0284$ & $0.0037\pm0.0019$ & $\mathbf{0.00\pm0.00}$ \\
$\lambda_{\mathrm{gen}}=0.010$ & $0.7187\pm0.0233$ & $0.5780\pm0.0284$ & $0.0027\pm0.0011$ & $\mathbf{0.00\pm0.00}$ \\
$\lambda_{\mathrm{gen}}=0.020$ & $0.7187\pm0.0233$ & $0.5780\pm0.0284$ & $0.0021\pm0.0006$ & $\mathbf{0.00\pm0.00}$ \\
$\lambda_{\mathrm{gen}}=0.050$ & $0.7187\pm0.0233$ & $0.5780\pm0.0284$ & $\mathbf{0.0018\pm0.0002}$ & $\mathbf{0.00\pm0.00}$ \\
\bottomrule
\end{tabular}

\end{table}

Overall (Tables~\ref{tab:sens_topfrac}--\ref{tab:sens_lamgen}), utility is stable across sweeps; SCG changes smoothly with edit strength/consistency; $\lambda_{\mathrm{gen}}$ is the main safeguard against high SCG and non-zero FR.


\section{Qualitative Analysis}
\label{sec:qualitative}

German Credit IID qualitative views (representative seed 0; patterns consistent with 30-seed aggregates). Figures~\ref{fig:mi_roles}--\ref{fig:scc_training_curves} and Table~\ref{tab:mask_interpret} summarize mask roles, latent-space structure, and training dynamics.

\subsection{Feature roles and DP sketch scores}

We rank features by the DP score $|\Delta P_j|+\mathrm{HSIC}_j$ (Section~\ref{subsec:mask}) and assign roles $N$, $M$, and $P\subset M$. Figure~\ref{fig:mi_roles} shows the highest scores as proxies (red), mid as mediators (green), and low as non-descendants (blue).

\begin{figure}[t]
  \centering
\resizebox{0.8\textwidth}{!}{%
\begin{tikzpicture}
    \begin{axis}[
      xmin=0.3,          
  xmax=19.5,         
  enlarge x limits=false,
  ybar,
      bar width=7pt,
      bar shift=0pt,
      width=0.95\linewidth,
      height=6.0cm,
      ymin=0, ymax=0.36,
      ylabel={DP score $|\Delta P_j| + \mathrm{HSIC}_j$},
      xtick={1,...,19},
      xticklabels={
        \texttt{employment},
        \texttt{existing\_credits},
        \texttt{credit\_history},
        \texttt{telephone},
        \texttt{check\_status},
        \texttt{dependents},
        \texttt{housing},
        \texttt{job},
        \texttt{property},
        \texttt{credit\_amount},
        \texttt{residence\_since},
        \texttt{duration},
        \texttt{purpose},
        \texttt{installment\_rate},
        \texttt{personal\_status\_sex},
        \texttt{other\_debtors},
        \texttt{savings},
        \texttt{foreign\_worker},
        \texttt{other\_install\_plans}
      },
      xticklabel style={rotate=90, anchor=east},
      ymajorgrids=false,
      xmajorgrids=false,
      axis x line*=bottom,
      axis y line*=left,
      legend style={at={(0.98,0.98)}, anchor=north east, draw=none, fill=white, font=\small},
      legend cell align=left
    ]

      \addplot+[fill=blue!20, draw=blue, postaction={pattern=north east lines, pattern color=black}
      ] coordinates {
        (8,0.143) (9,0.120) (10,0.095) (11,0.084) (12,0.069) (13,0.064)
        (14,0.058) (15,0.046) (16,0.041) (17,0.032) (18,0.016) (19,0.008)
      };

      \addplot+[fill=green!50, draw=green!50!black, postaction={pattern=crosshatch, pattern color=black}] coordinates {
        (4,0.214) (5,0.209) (6,0.177) (7,0.151)
      };

      \addplot+[fill=red!50, draw=red, postaction={pattern=dots, pattern color=black}] coordinates {
        (1,0.339) (2,0.334) (3,0.246)
      };

      \legend{Non-descendant, Mediator, Proxy ($P \subset M$)}
    \end{axis}
  \end{tikzpicture}
  }
  \caption{Differentially private mask scores $|\Delta P_j| + \mathrm{HSIC}_j$ for German Credit features, colored by SCC-VFL mask roles: non descendants $N$ (blue), mediators $M$ (green), and proxies $P \subset M$ (red).}
  \label{fig:mi_roles}
\end{figure}

\subsection{Mask interpretability summary}
\label{subsec:mask-interpret}

Table~\ref{tab:mask_interpret} reports a representative ranking of features by $\operatorname{MI}(x_j,s)$ together with their assigned roles. High-$\operatorname{MI}$ coordinates are consistently categorized as proxies ($P$), while mid-range coordinates fall into mediators ($M$). Features with low or near-zero $\operatorname{MI}$ are assigned to non-descendants ($N$). This monotone alignment between dependence strength and role supports the intended semantics of $P/M/N$ and indicates the mask is not arbitrary.

\begin{table}[t]
  \caption{Example mask interpretation on German Credit: high \(\operatorname{MI}\) demographic variables fall into proxies \(P\), mid \(\operatorname{MI}\) structural variables into mediators \(M\), and low \(\operatorname{MI}\) attributes into non-descendants \(N\).}
  \label{tab:mask_interpret}
  \centering
  \small
  \setlength{\tabcolsep}{6pt}
  \renewcommand{\arraystretch}{1.1}
  \begin{tabular}{lcc}
    \toprule
    Feature name & \(\operatorname{MI}(x_j, s)\) & Role \\
    \midrule
    \texttt{personal\_status\_sex} & 0.0539 & Proxy (\(P\)) \\
    \texttt{employment}           & 0.0476 & Proxy (\(P\)) \\
    \texttt{housing}              & 0.0471 & Proxy (\(P\)) \\
    \texttt{other\_install\_plans}& 0.0335 & Mediator (\(M\)) \\
    \texttt{property}             & 0.0267 & Mediator (\(M\)) \\
    \texttt{purpose}              & 0.0226 & Mediator (\(M\)) \\
    \texttt{existing\_credits}    & 0.0215 & Mediator (\(M\)) \\
    \texttt{residence\_since}     & 0.0181 & Non-desc (\(N\)) \\
    \texttt{dependents}           & 0.0170 & Non-desc (\(N\)) \\
    \texttt{credit\_amount}       & 0.0000 & Non-desc (\(N\)) \\
    \bottomrule
  \end{tabular}
\end{table}

\subsection{Latent representation PCA views}

Figure~\ref{fig:pca_methods} compares PCA projections across methods; SCC-VFL shows the strongest mixing between $s$ groups, aligning with reduced sensitive signal in the fused representation.


\begin{figure*}[t]
  \centering

  \begin{subfigure}[t]{0.32\textwidth}
    \centering
    \includegraphics[width=\linewidth]{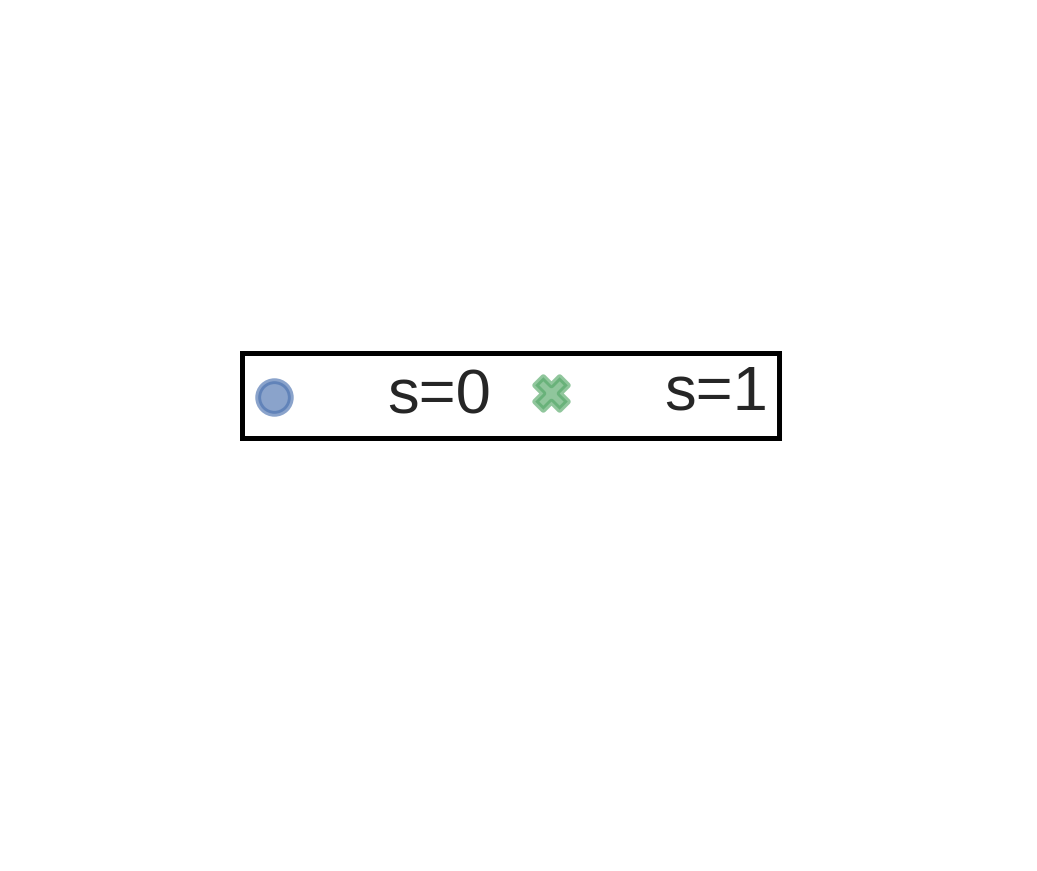}
    \caption*{\footnotesize Legend}
  \end{subfigure}\hfill
  \begin{subfigure}[t]{0.32\textwidth}
    \centering
    \includegraphics[width=\linewidth]{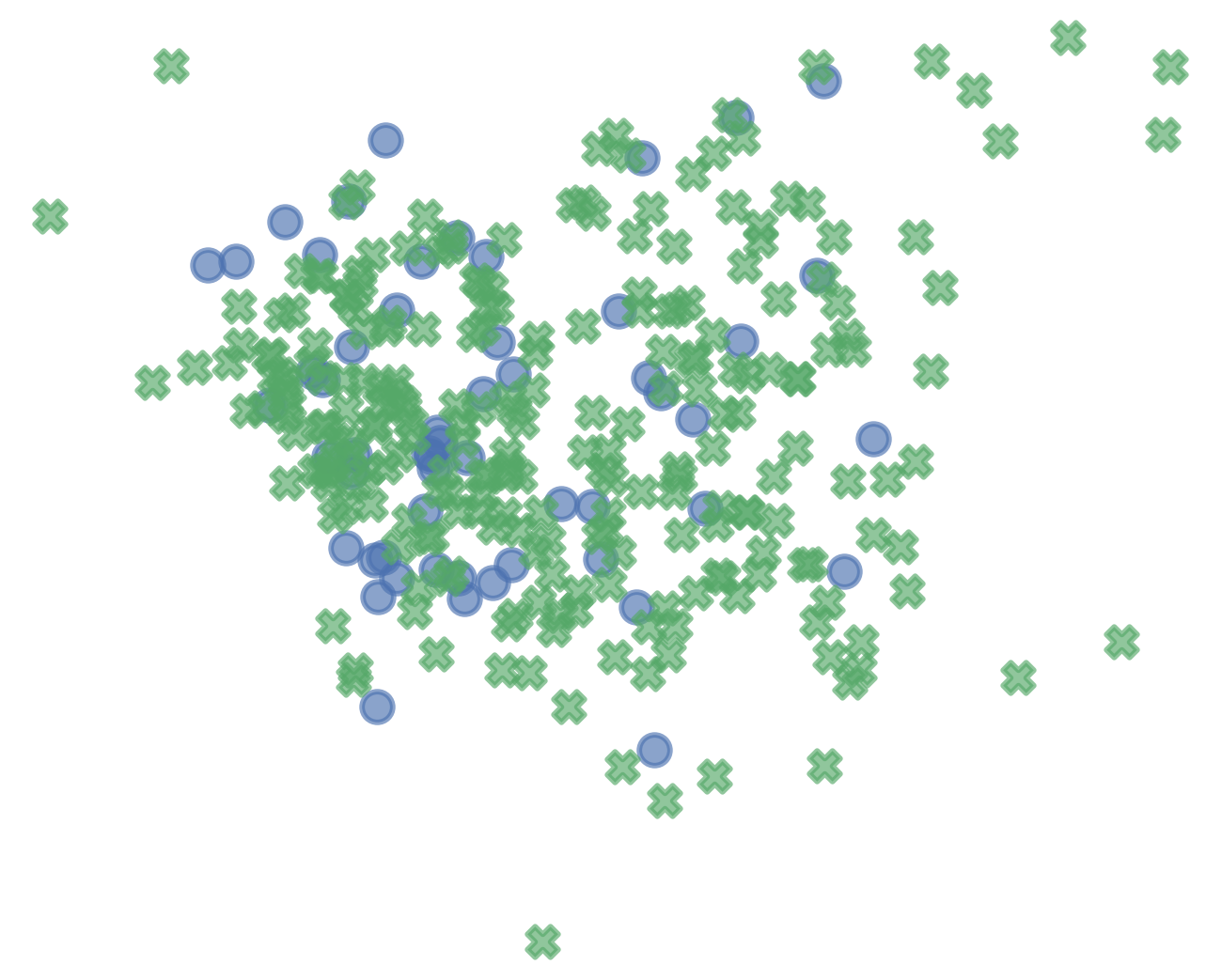}
    \caption{Baseline (no fairness)}
    \label{fig:pca_baseline}
  \end{subfigure}\hfill
  \begin{subfigure}[t]{0.32\textwidth}
    \centering
    \includegraphics[width=\linewidth]{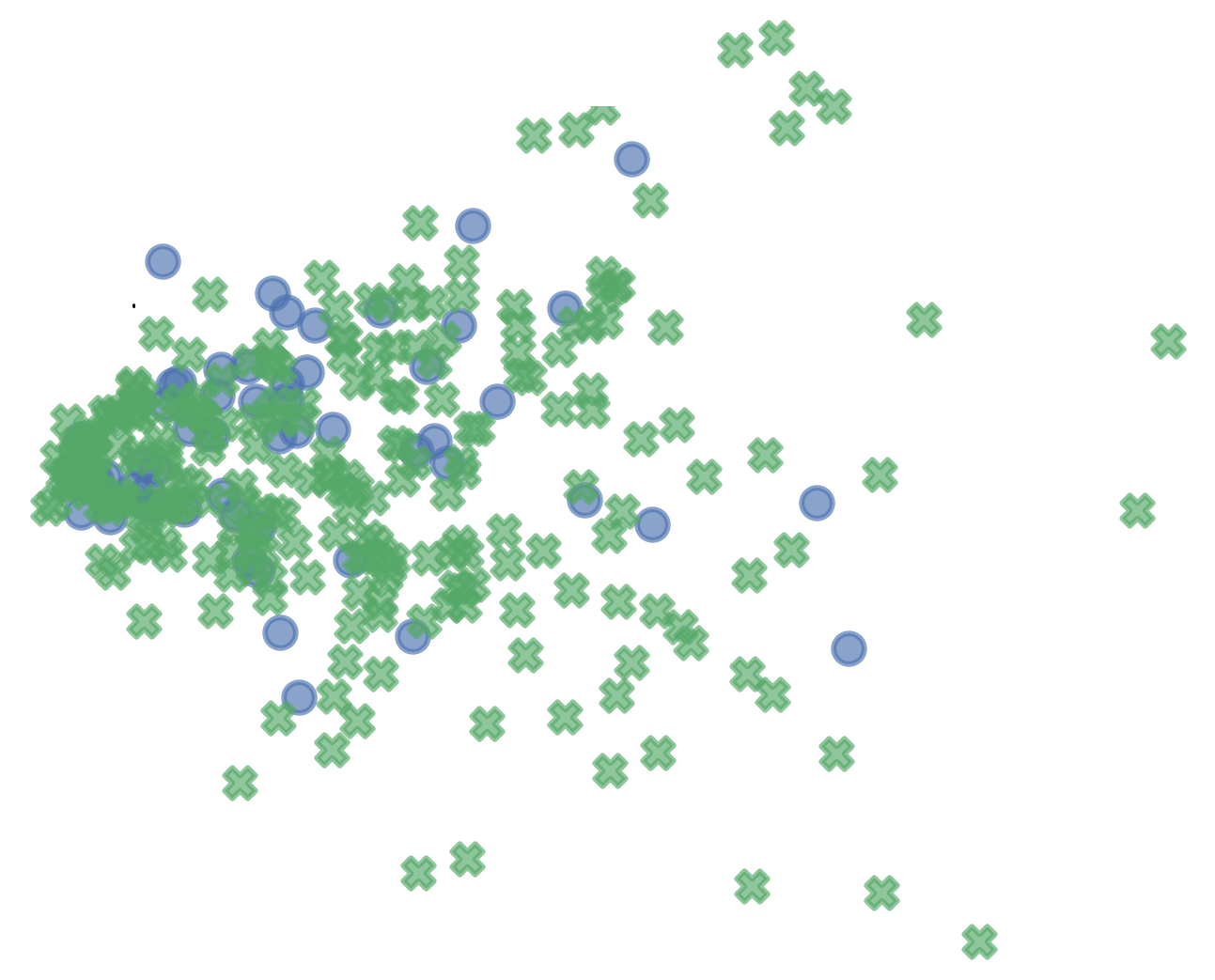}
    \caption{Uniform-CF (na\"ive)}
    \label{fig:pca_uniformcf}
  \end{subfigure}

  \vspace{0.6em}

  \begin{subfigure}[t]{0.32\textwidth}
    \centering
    \includegraphics[width=\linewidth]{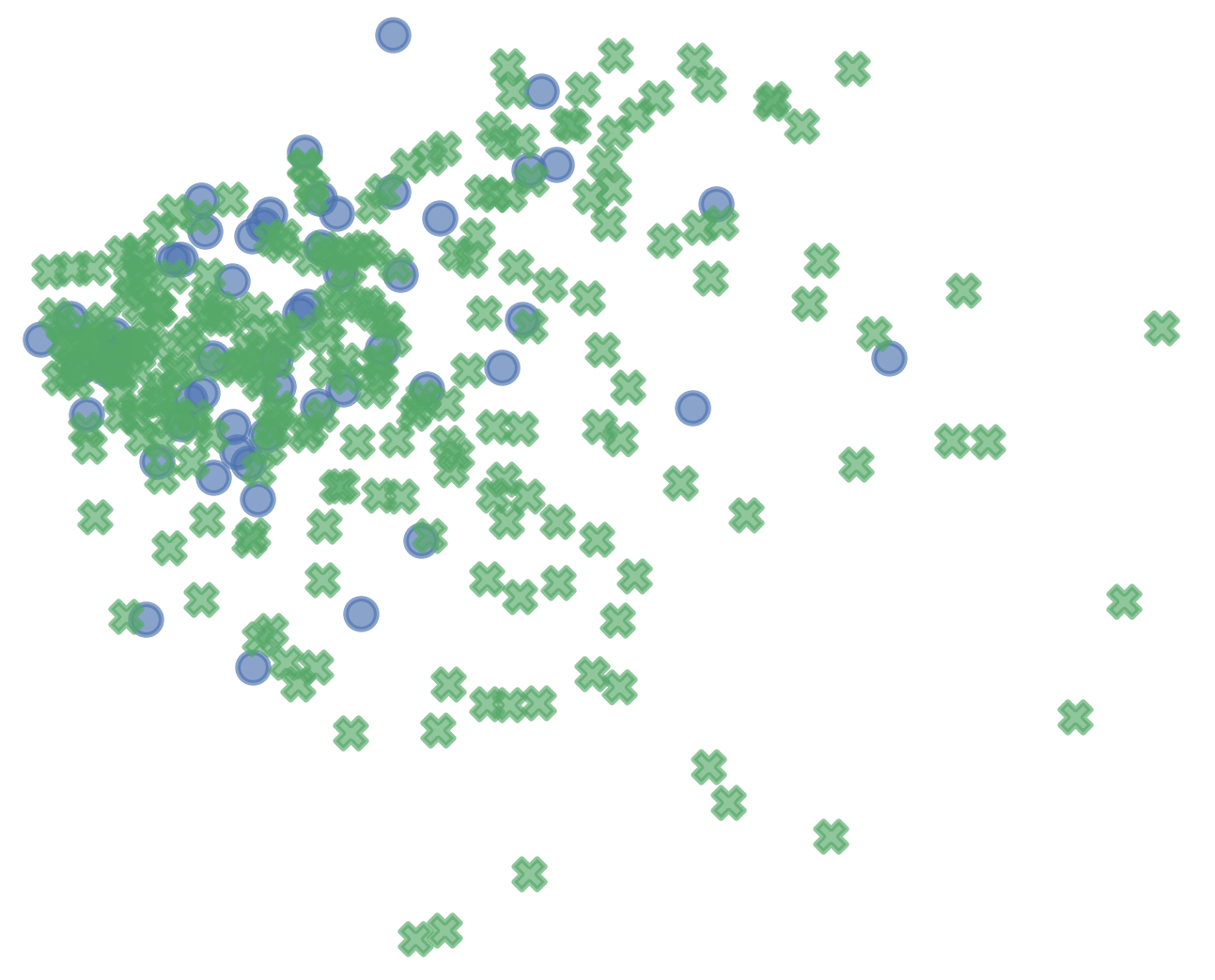}
    \caption{Policy-blind Mask}
    \label{fig:pca_policyblind}
  \end{subfigure}\hfill
  \begin{subfigure}[t]{0.32\textwidth}
    \centering
    \includegraphics[width=\linewidth]{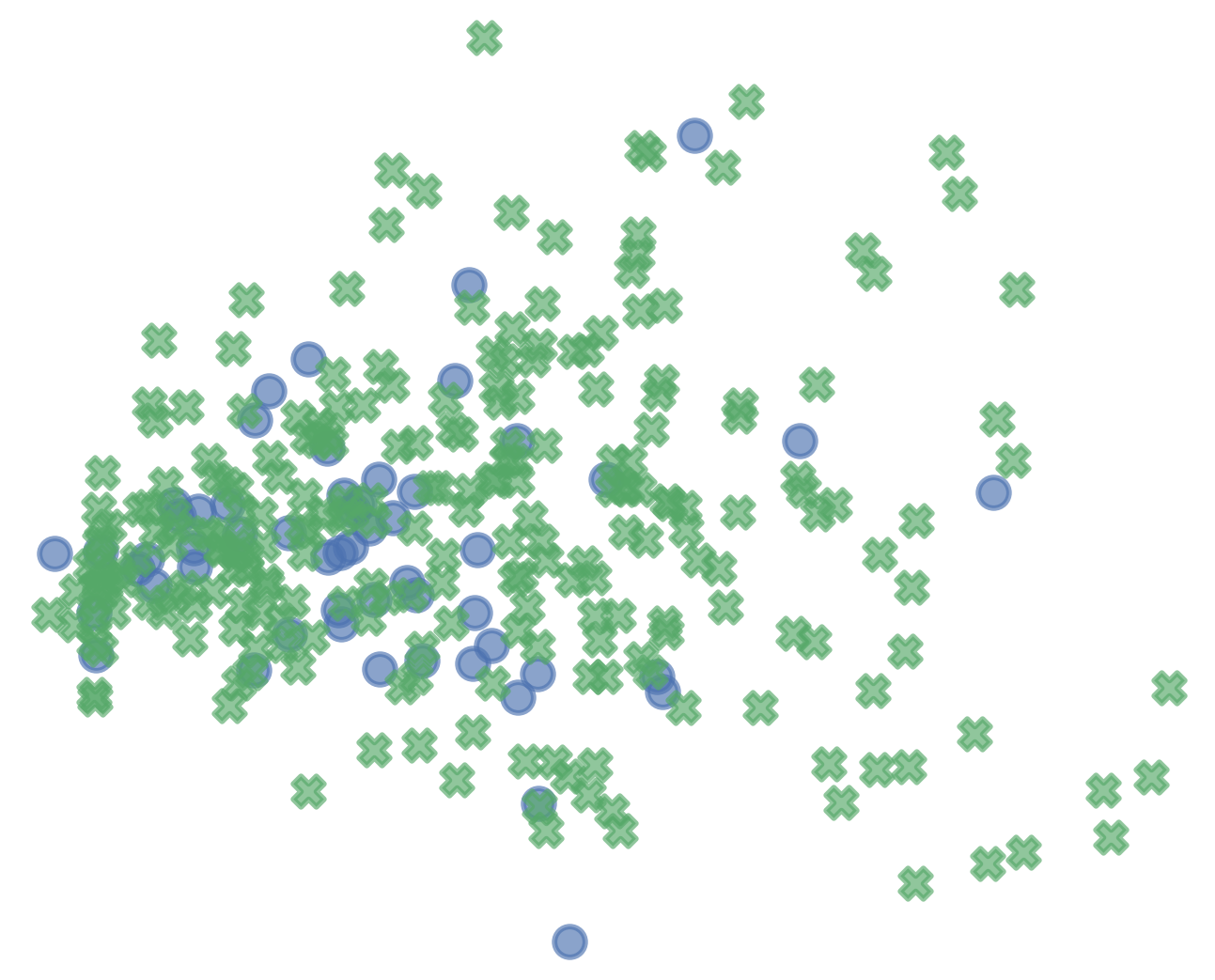}
    \caption{Server-Consistency}
    \label{fig:pca_servercons}
  \end{subfigure}\hfill
  \begin{subfigure}[t]{0.32\textwidth}
    \centering
    \includegraphics[width=\linewidth]{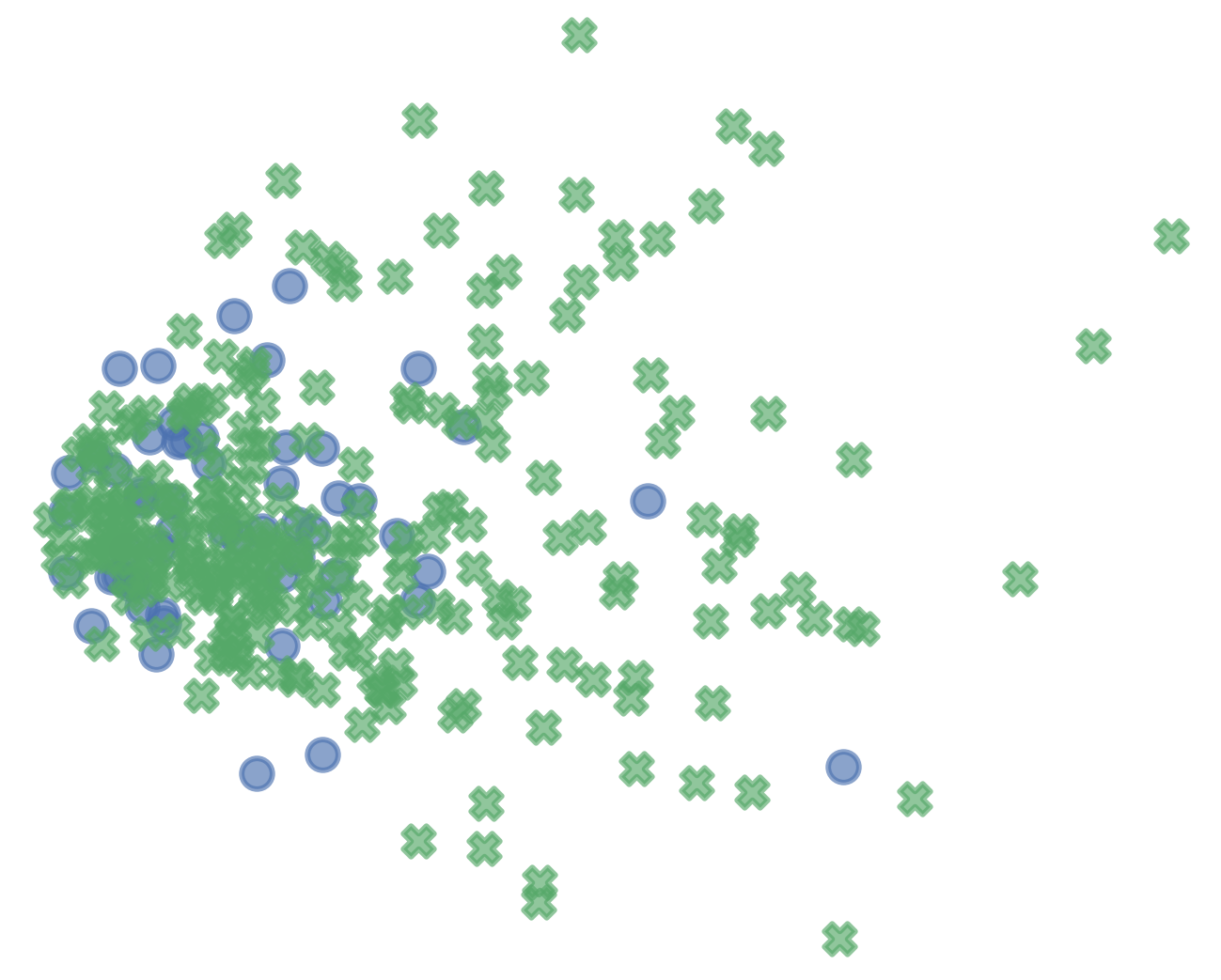}
    \caption{SCC--VFL (ours)}
    \label{fig:pca_sccvfl}
  \end{subfigure}

  \caption{PCA of fused representations colored by the sensitive-group indicator $s$ for all methods on German Credit (IID, seed 0), where $s=0$ and $s=1$ denote the two groups defined by the sensitive attribute. SCC--VFL shows the strongest mixing between the two groups while preserving a coherent manifold, suggesting reduced sensitive-attribute signal relative to the baselines.}
  \label{fig:pca_methods}
\end{figure*}

\subsection{Training dynamics of SCC-VFL}

Across German Credit IID experiments aggregated over 30 seeds, we inspect SCC-VFL training curves to verify that the gains reflect stable optimization rather than a single favorable run. Figure~\ref{fig:scc_training_curves} shows smooth loss convergence and consistently competitive validation accuracy, while SCG increases and then saturates once the consistency term activates and FR remains low, indicating SCC-VFL primarily refines logits and confidence under counterfactual edits instead of causing frequent label flips.

\begin{figure*}[t]
  \centering

  \begin{subfigure}[t]{0.32\textwidth}
    \centering
    \includegraphics[width=\linewidth, height=0.6\linewidth]{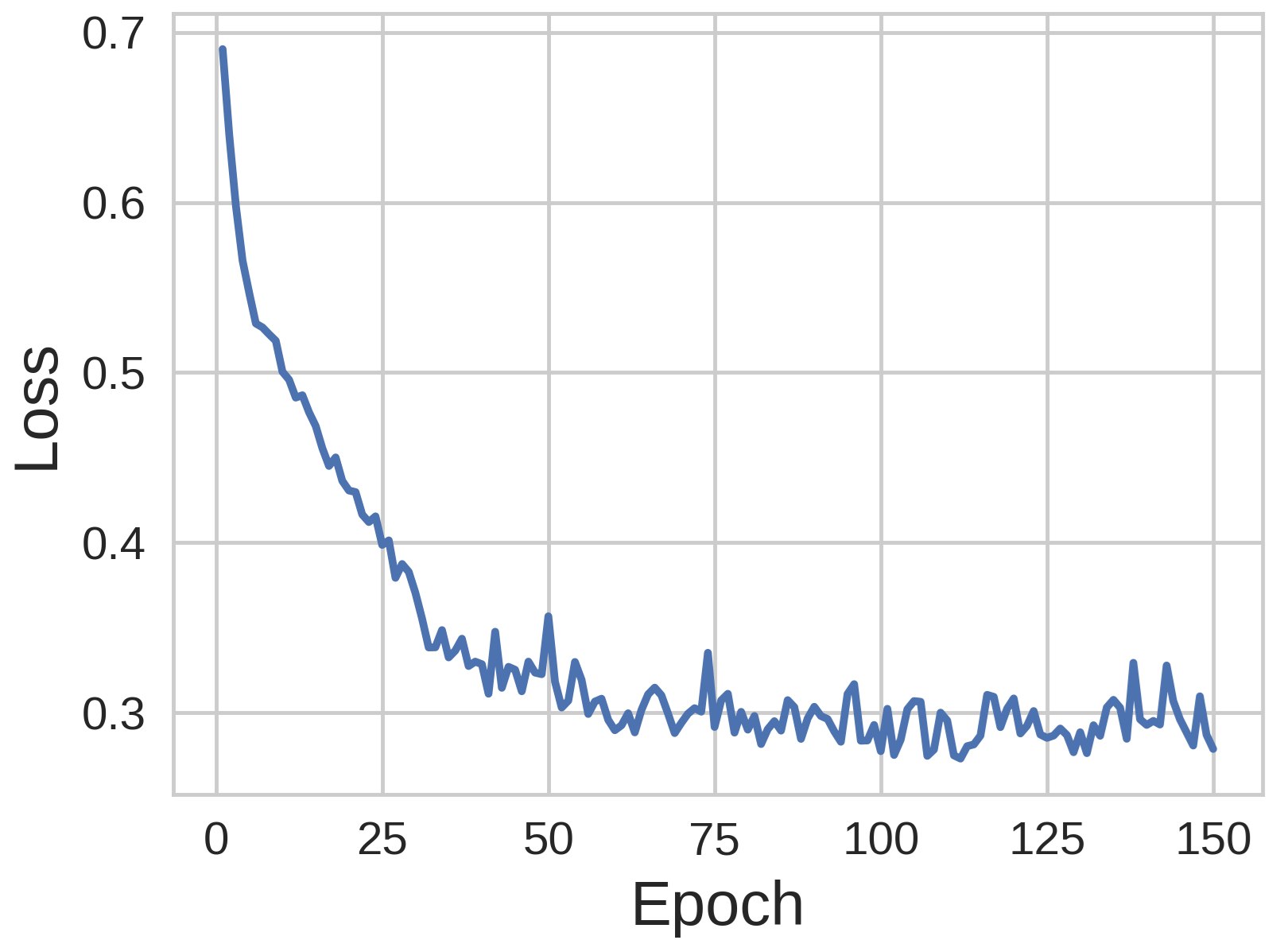}
    \caption{Training loss}
    \label{fig:sccvfl_trainloss}
  \end{subfigure}\hfill
  \begin{subfigure}[t]{0.32\textwidth}
    \centering
    \includegraphics[width=\linewidth, height=0.6\linewidth]{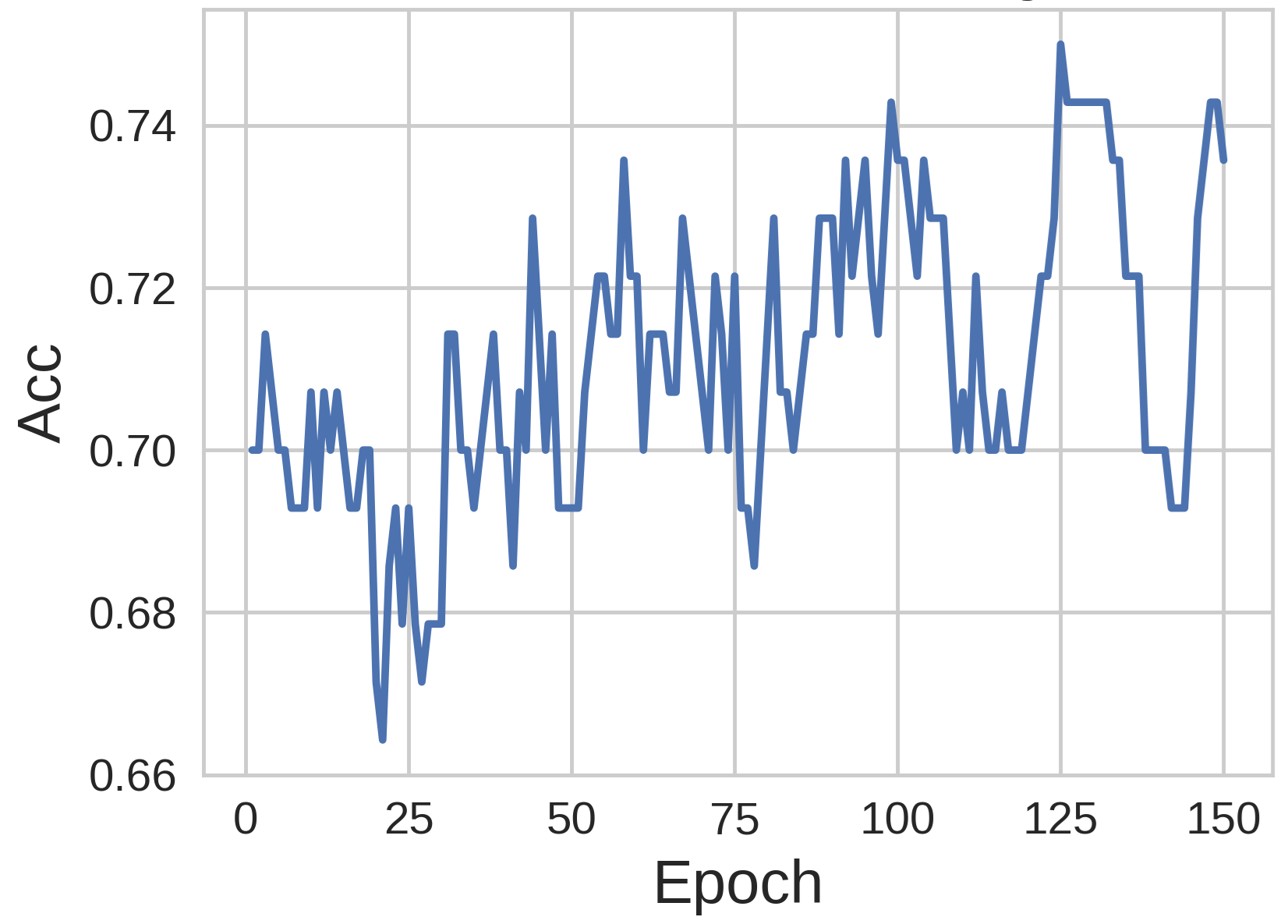}
    \caption{Validation accuracy}
    \label{fig:sccvfl_valacc}
  \end{subfigure}\hfill
  \begin{subfigure}[t]{0.32\textwidth}
    \centering
    \includegraphics[width=\linewidth, height=0.6\linewidth]{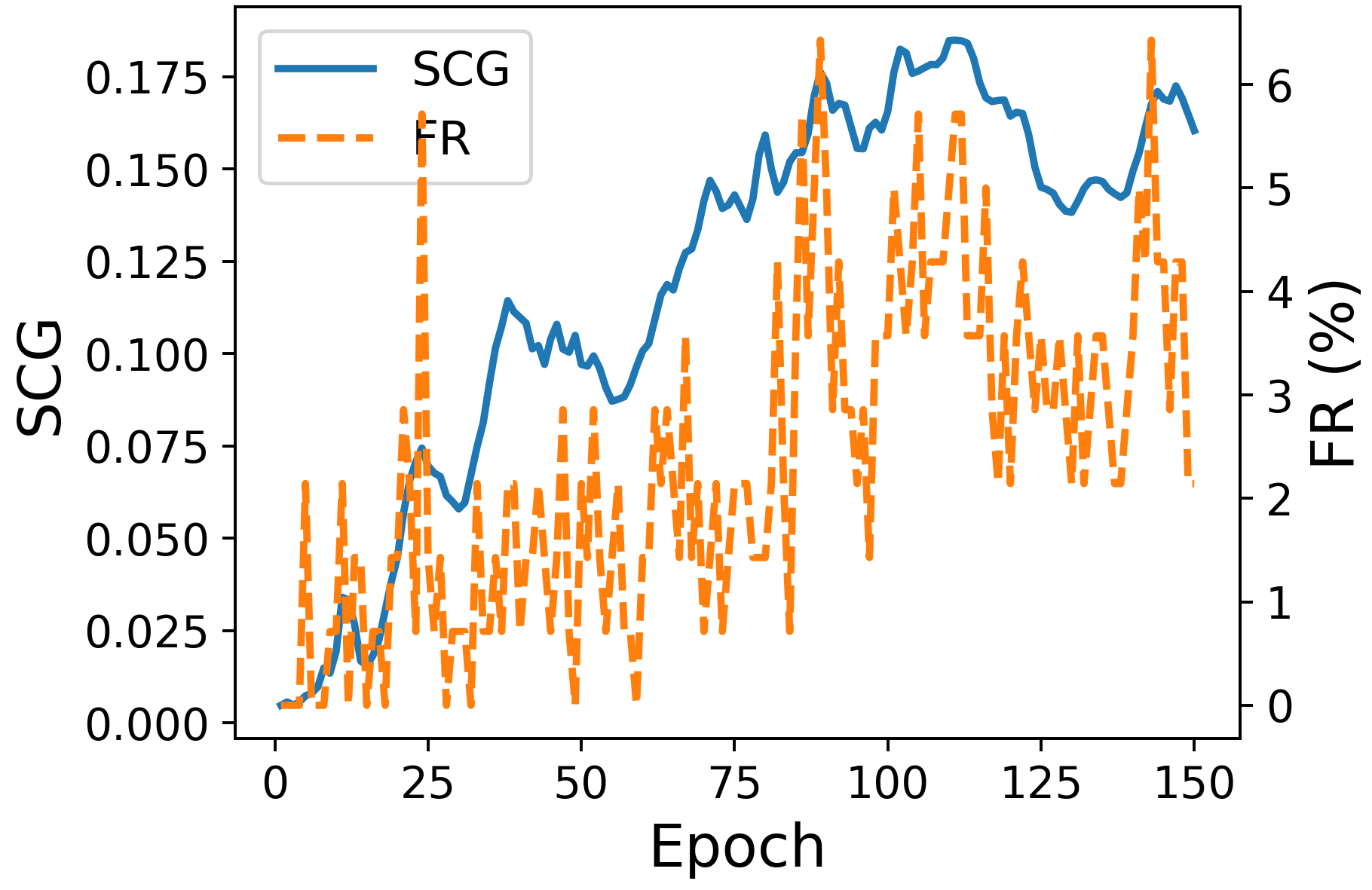}
    \caption{SCG and FR over epochs}
    \label{fig:scc-scg-fr}
  \end{subfigure}

  \caption{SCC--VFL training dynamics on German Credit IID: train loss, validation accuracy, and the evolution of SCG and FR over epochs. The model converges to a regime with low loss, strong accuracy, and low flip rate while maintaining non-trivial counterfactual consistency.}
  \label{fig:scc_training_curves}
\end{figure*}

\subsection{Counterfactual case studies across methods}

Instance-level summaries (seed 0) illustrating that SCC-VFL typically preserves labels with smaller, structured mediator edits, while some baselines require larger shifts and can flip labels on borderline cases.

\begin{tcolorbox}[title={Case study 1: correctly accepted applicant, \(y=0, s=1\)},fonttitle=\bfseries]
\small
Top standardized features: \texttt{credit\_amount}, \texttt{duration}, \texttt{purpose} \((M)\), \texttt{employment} \((P)\), \texttt{other\_install\_plans} \((M)\).  
Baseline and all fairness baselines keep the label \(y=0\) but Uniform-CF requires a large mediator shift \(\|\Delta_M\| \approx 0.52\). SCC-VFL keeps the label stable with a smaller and more structured change \(\|\Delta_M\| \approx 0.25\), mainly nudging employment and installment related mediators.
\end{tcolorbox}

\begin{tcolorbox}[title={Case study 2: ambiguous low-risk applicant, \(y=0, s=1\)},fonttitle=\bfseries]
\small
The baseline predicts \(y=0\) with low confidence, while Server-Consistency flips to \(y=1\) with high confidence, revealing residual dependence on \(s\). Uniform-CF and Policy-blind Mask leave \(y=0\) but change probabilities with large \(\|\Delta_M\|\). SCC-VFL preserves the decision and adjusts confidence in a moderate way, indicating that its generator learns a minimal recourse-style edit rather than over-correcting. This pattern is typical across similar borderline cases sampled over multiple seeds.
\end{tcolorbox}

\begin{tcolorbox}[title={Case study 3: high-risk applicant, \(y=1, s=1\)},fonttitle=\bfseries]
\small
Uniform-CF flips the label from \(1\) to \(0\) under strong mediator perturbations, which corresponds to a brittle counterfactual. Policy-blind Mask and Server-Consistency keep \(y=1\) but still rely on large or opaque changes. SCC-VFL maintains the positive decision with moderate mediator edits, confirming that it does not use counterfactuals to hide risk but instead enforces stability around the original decision.
\end{tcolorbox}

\subsection{Recourse-style mediator edits in SCC-VFL}

Seed-0 examples highlighting that SCC-VFL edits concentrate on mediator coordinates while leaving non-descendants fixed, producing small, policy-aligned directions of change reminiscent of actionable recourse~\cite{ustun2019actionable} and counterfactual explanations~\cite{wachter2017counterfactual}.

\begin{tcolorbox}[title={Recourse example A: low-risk, \( \hat{y}=0 \)},fonttitle=\bfseries]
\small
SCC-VFL slightly increases \texttt{employment} tenure and \texttt{housing} quality, and reduces \texttt{existing\_credits}, while keeping non-descendants such as \texttt{credit\_amount} and \texttt{duration} fixed. These changes resemble realistic recourse actions that a bank might recommend to improve future credit decisions without directly editing demographic attributes, and we observe similar patterns for other low-risk clients.
\end{tcolorbox}

\begin{tcolorbox}[title={Recourse example B: high-risk, \( \hat{y}=1 \)},fonttitle=\bfseries]
\small
For a predicted defaulter, SCC-VFL proposes small shifts in \texttt{housing}, \texttt{property}, and \texttt{other\_install\_plans}, again leaving non-descendants unchanged. The model preserves the current label but provides a concrete direction for improvement in mediator space, illustrating that SCC-VFL implements policy-aligned, on-support counterfactuals rather than arbitrary feature perturbations. Across random draws, mediator edits stay within similar magnitude ranges, which matches the low flip rates reported in the main tables.
\end{tcolorbox}

\section{Worked Example: Counterfactual Enforcement in Credit Decisions}
\label{app:worked_example}

This appendix provides an illustrative example of how SCC-VFL enforces selective
counterfactual consistency in a credit decision setting. The example is intended
to clarify the mechanics of mediator editing and server-side enforcement and does
not constitute a normative judgment about which features should be considered
permissible or impermissible in practice.

\noindent \textbf{Setup.}
Consider a loan applicant whose features are vertically partitioned across
institutions. The protected attribute is age, which is not provided as a model
input and is held by a trusted party. A policy review specifies the following
feature roles:
\begin{itemize}[nosep]
    \item Non-descendants $N$: loan amount and loan duration
    \item Permissible mediators $M$: employment tenure and credit history
    \item Impermissible proxies $P$: housing status and number of dependents
\end{itemize}
The applicant is initially observed with age corresponding to a younger group.

\noindent \textbf{Counterfactual generation.}
To evaluate the counterfactual scenario in which the applicant is older, the
server provides a target sensitive embedding corresponding to the older age
group. Each party applies the masked generator as follows. Non-descendants in
$N$ are copied exactly, ensuring identity on features that should not change
under the intervention. Mediators in $M$ are edited to remain on-support under
the conditional distribution for an older applicant, for example by increasing
employment tenure or modestly improving credit history. Proxy features in $P$
pass through unchanged but are guarded by the proxy adversary to suppress residual
sensitive signal.

\noindent \textbf{Server-side enforcement.}
The server fuses representations from the original and counterfactual inputs and
applies the Selective Counterfactual Consistency loss to penalize prediction
differences. Because only policy-permitted mediators are edited and proxy leakage
is controlled, this penalty targets impermissible influence of age on the lending
decision rather than suppressing legitimate pathways.

\noindent \textbf{Interpretation.}
Framed as a counterfactual explanation~\cite{wachter2017counterfactual}, if the predicted decision changes substantially between the original and counterfactual representations, the SCC penalty increases, discouraging age-driven instability. If the decision remains stable, the model satisfies selective counterfactual consistency for this individual under the given policy specification. This example illustrates how SCC-VFL operationalizes individual-level stability without requiring access to raw sensitive attributes or a fully specified causal graph.

\end{document}